\documentclass[12pt,reqno]{amsart}

\usepackage{amsmath,amssymb,amsthm,mathtools}
\usepackage[margin=1in]{geometry}
\usepackage{needspace}
\usepackage{xcolor}
\usepackage{url}
\usepackage{hyperref}
\usepackage[protrusion=false]{microtype}

\allowdisplaybreaks[3]
\numberwithin{equation}{section}

\definecolor{linkblue}{RGB}{25,77,135}
\hypersetup{
colorlinks=true,
linkcolor=linkblue,
citecolor=linkblue,
urlcolor=linkblue,
unicode=true,
pdfencoding=auto,
pdftitle={Tight entropy contraction beyond detailed balance},
pdfauthor={Li Gao and Jingyu Guo},
pdfsubject={Quantum channels, conditional variance, and complete entropy contraction},
pdfkeywords={quantum channel,quantum Markov semigroup,complete modified logarithmic Sobolev inequality,GNS spectral gap,Pimsner-Popa index,relative entropy}
}

\newtheorem{theorem}{Theorem}[section]
\newtheorem{proposition}[theorem]{Proposition}
\newtheorem{lemma}[theorem]{Lemma}
\newtheorem{corollary}[theorem]{Corollary}

\theoremstyle{remark}
\newtheorem{remark}[theorem]{Remark}

\newcommand{\Tr}{\operatorname{Tr}}
\newcommand{\id}{\operatorname{id}}
\newcommand{\Ran}{\operatorname{Ran}}
\newcommand{\Ker}{\operatorname{Ker}}
\newcommand{\spec}{\operatorname{spec}}
\newcommand{\EP}{\operatorname{EP}}
\newcommand{\cb}{\mathrm{cb}}

\newcommand{\cB}{\mathcal B}
\newcommand{\cD}{\mathcal D}
\newcommand{\cH}{\mathcal H}
\newcommand{\cK}{\mathcal K}
\newcommand{\cL}{\mathcal L}
\newcommand{\cN}{\mathcal N}
\newcommand{\cR}{\mathcal R}

\newcommand{\eps}{\varepsilon}
\newcommand{\dd}{\,\mathrm d}
\newcommand{\ip}[2]{\left\langle #1,#2\right\rangle}
\newcommand{\norm}[1]{\left\lVert #1\right\rVert}

\newcommand{\GNS}{\mathrm{GNS}}

\newcommand{\Id}{\mathrm I}
\newcommand{\Var}{\operatorname{Var}}
\newcommand{\HS}{\mathrm{HS}}

\title[Entropy contraction beyond detailed balance]{Tight entropy contraction beyond detailed balance}
\author{Li Gao}
\address{School of Mathematics and Statistics, Wuhan University, Wuhan 430072, China}
\address{Wuhan Institute of Quantum Technology, Wuhan 430075, China}
\email{gao.li@whu.edu.cn}
\author{Jingyu Guo}
\address{School of Mathematics and Statistics, Wuhan University, Wuhan 430072, China}
\email{guojingyu@whu.edu.cn}
\date{September 21, 2026}
\keywords{Quantum channels, relative entropy, quantum Markov semigroups, modified logarithmic Sobolev inequalities, GNS spectral gap, Pimsner--Popa index}

\begin{document}
\begin{abstract}
We establish quantitative entropy contraction bounds for finite-dimensional
quantum channels beyond detailed balance. For channels compatible with a
faithful conditional expectation, we show that GNS (Gelfand--Naimark--Segal) norm contraction yields
relative entropy
contraction with only a logarithmic loss in the dimensional constant. For quantum Markov semigroups
with a faithful asymptotic conditional
expectation, this gives a lower bound on the complete modified logarithmic
Sobolev constant in terms of the GNS spectral gap.
The bounds allow initial entanglement with an external environment and are independent
of its system size. Applications to
alternating reservoir pulses and continuously coupled energy ladders give
bounds with optimal size dependence.
\end{abstract}

\maketitle

\section{Introduction}

Relative entropy is a fundamental measure of the distinguishability
of quantum states.
Its decrease under a quantum channel quantifies information loss
in quantum information processing.
This monotonicity is called the data-processing inequality
\cite{Lindblad1975}, and strong data-processing inequalities
quantify the contraction \cite{HircheRouzeFranca2022}, which has wide applications in quantum
information and quantum computation.
Entropy contraction estimates yield mixing-time bounds for quantum
Markov semigroups \cite{KastoryanoTemme2013}, establish rapid
thermalization of commuting quantum spin chains
\cite{BardetCapelGaoLuciaPerezGarciaRouze2023}, and constrain the
performance of noisy quantum optimization algorithms
\cite{FrancaGarciaPatron2021}.
Mathematically, quadratic norm estimates are often easier to obtain,
but converting them into entropy contraction can introduce a dependence
on the dimension of the system.
In this work, we prove a conversion with only logarithmic dependence
on the index of the equilibrium projection, without detailed balance.
The estimate also extends to the presence of an idle external system,
uniformly over its dimension and allowing initial entanglement.

Let $\Phi:\cB(\cH)\to\cB(\cH)$ be a unital and completely positive map.
We use the Heisenberg picture: $\Phi$ acts on observables and its trace
adjoint $\Phi_*$ is a quantum channel (completely positive and trace preserving) acting on states.
Throughout the paper, we restrict the discussion to finite dimensions. Let $E$ be a faithful
conditional expectation
$E:\cB(\cH)\to\cN$ onto the fixed point subalgebra $\cN$ satisfying
\begin{align}
\Phi E=E\Phi=E.
\label{eq:channel-compatibility}
\end{align}
Thus $\Phi$ fixes $\cN$ pointwise and preserves the reference state
$E_*\rho$ associated with each initial state $\rho$. We allow nontrivial
fixed algebras $\cN$ which means $\Phi$ admits more than one equilibrium state. Recall that for two
quantum states $\rho$ and $\sigma$, the relative entropy is
\begin{align}
D(\rho\Vert\sigma)&=\Tr\rho(\log\rho-\log\sigma).
\end{align}
The ordinary and complete entropy contraction coefficients are
\begin{align}
\eta(\Phi\mid E)
 &=\sup_{\rho:\,D(\rho\Vert E_*\rho)>0}
 \frac{D(\Phi_*\rho\Vert E_*\rho)}{D(\rho\Vert E_*\rho)},
 \qquad
 \eta^{\mathrm c}(\Phi\mid E)
 =\sup_{n\ge1}\eta(\Phi\otimes \id_{M_n}\mid E\otimes \id_{M_n}).
\end{align}
The first supremum is over states, while the complete version includes all joint
states, with reference $(E_*\otimes \id_{M_n})\rho$ retaining their external marginal.
For these coefficients in the GNS-symmetric setting, see
\cite[Definition~5.1]{GaoJungeLaRacuenteLi2025}.

Fix a faithful state $\omega$ with $E_*\omega=\omega$. Define the GNS (Gelfand--Naimark--Segal)
inner product and corresponding contraction coefficient
\begin{align}
\ip{X}{Y}_{2,\omega}&=\Tr(\omega X^*Y),\qquad
s=\norm{\Phi-E}_{2,\omega\to2,\omega}
\end{align}
Proposition~\ref{prop:channel-stability} shows that $s\in[0,1]$ is independent
of $\omega$ and unchanged by matrix amplification.
Recall the Pimsner--Popa index \cite[Section~2]{PimsnerPopa1986} and its complete
version as
\begin{align}
C(E)&=\inf\{C\ge1:\rho\le CE_*\rho\text{ for all }\rho\ge0\},
\qquad C_{\cb}(E)=\sup_{n\ge1}C(E\otimes \id_{M_n}).
\label{eq:indices}
\end{align}

\begin{theorem}[Entropy contraction]
\label{thm:channel}
Let $E$ be a faithful conditional expectation and
$\Phi$ be a unital completely positive map with $\Phi E=E\Phi=E$. Let $\omega=E_*\omega$ be a faithful invariant state and suppose the GNS contraction coefficient
\[s=\norm{\Phi-E}_{2,\omega\to2,\omega} <1 \]
Then for any state $\rho$,
\begin{align}
D(\Phi_*\rho\Vert E_*\rho )\le \eta D(\rho\Vert E_*\rho )
\label{eq:channel-main}
\end{align}
with the optimal contraction coefficient
\begin{align}
\eta &\le\left(1+\frac{c_0(1-s^2)}{1+\log C(E)}\right)^{-1},
 \qquad c_0=\frac{1-\log2}{4}.
\end{align}
The same bound for the complete coefficient $\eta^c$ holds with $C(E)$
replaced by $C_{\cb}(E)$.
\end{theorem}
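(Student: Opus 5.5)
The plan is to prove the equivalent ``entropy-deficit'' form of \eqref{eq:channel-main}. Write $\sigma=E_*\rho$ and $\rho'=\Phi_*\rho$. Since $E_*\Phi_*=E_*$, we have $E_*\rho'=\sigma$. The claimed bound on $\eta$ is equivalent to
\[
D(\rho\Vert\sigma)-D(\rho'\Vert\sigma)\;\ge\;\frac{c_0(1-s^2)}{1+\log C(E)}\,D(\rho'\Vert\sigma).
\]
I would first reduce to faithful $\sigma$ by perturbing $\rho$ toward $\omega$, and use Proposition~\ref{prop:channel-stability} to replace $\omega$ by $\sigma$ in the definition of $s$. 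This is legitimate since $E_*\sigma=\sigma$ and $s$ does not depend on the invariant faithful state. It gives the quadratic input
\[
\norm{\Phi(X)-E(X)}_{2,\sigma}\le s\norm{X-E(X)}_{2,\sigma}.
\]
In the predual, this says that the GNS-type $\chi^2$ distance $\chi^2_\sigma(\tau)=\Tr\bigl(\sigma^{-1}(\tau-\sigma)^*(\tau-\sigma)\bigr)$, taken in the asymmetric GNS weighting, contracts by $s^2$ along $\tau\mapsto\Phi_*\tau$ whenever $E_*\tau=\sigma$.

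The second step converts quadratic information into entropy while losing only $\log C(E)$. I would use the integral representation of relative entropy through hockey-stick divergences (Frenkel's formula), which writes $D(\rho\Vert\sigma)$ as $\int\Tr\bigl(\rho-\gamma\sigma\bigr)_+\,w(\gamma)\dd\gamma$ plus boundary terms. By the definition \eqref{eq:indices} of $C(E)$ we have $\rho\le C(E)\sigma$, so $(\rho-\gamma\sigma)_+=0$ for $\gamma\ge C(E)$. The same holds for $\rho'$, because $\rho'\le C(E)E_*\rho'=C(E)\sigma$. Hence the $\gamma$-range is effectively $[0,C(E)]$.

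On the bulk window $\gamma\in[1/2,2]$, where $D$ is comparable to $\chi^2$, I would apply the $s^2$-contraction of $\chi^2_\sigma$. In fact I would apply it to the mixtures $(1-\lambda)\sigma+\lambda\rho$ and use joint convexity, so as to obtain a $(1-s^2)$ fraction of strict decrease. The constant $(1-\log 2)/4$ should come out of the comparison between $x\log x-x+1$ and $(x-1)^2$ on this window. The tail $\gamma\in[2,C(E)]$ carries weight $\int_2^{C}\dd\gamma/\gamma\approx\log C(E)$. It only needs monotonicity (data processing for each hockey-stick divergence), together with a bound of the tail of $D(\rho'\Vert\sigma)$ by $(1+\log C)$ times the bulk part. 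Combining the strict bulk decrease with the monotone tail, and rearranging, gives the stated form $(1+\kappa)^{-1}$ with $\kappa=c_0(1-s^2)/(1+\log C)$.

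The main obstacle is the bulk step. The GNS $\chi^2$ is not the natural quadratic form of relative entropy (that is the Kubo--Mori one), and $\sigma$ need not commute with $\rho$. I would first try to bound the bulk contribution of $D(\rho'\Vert\sigma)$ from above by $\chi^2_\sigma(\rho')$, using the operator inequality $D\le\chi^2$ valid for any $\chi^2$ dominating Kubo--Mori, which GNS does. I would then bound the bulk decrease of $D$ from below by $\frac{1-\log2}{4}\bigl(\chi^2_\sigma(\rho_\lambda)-\chi^2_\sigma(\Phi_*\rho_\lambda)\bigr)$, via a truncated mixture $\rho_\lambda$ with $\rho_\lambda\le 2\sigma$, chosen so that $\lambda\sim 1/(1+\log C)$.

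For the complete version, apply everything to $\Phi\otimes\id_{M_n}$ and $E\otimes\id_{M_n}$. Proposition~\ref{prop:channel-stability} gives that $s$ is unchanged by amplification, and $C(E\otimes\id_{M_n})\le C_{\cb}(E)$. The bound is therefore uniform in $n$.
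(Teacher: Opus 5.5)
Your reduction to the deficit form, the passage from $\omega$ to $\sigma=E_*\rho$ via Proposition~\ref{prop:channel-stability}, the predual contraction $\chi^2_\sigma(\Phi_*\tau)\le s^2\chi^2_\sigma(\tau)$ for $E_*\tau=\sigma$, and the amplification step are all correct. The gap is the bulk step, which carries the whole theorem, and it fails on two counts.

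First, the mixture is inconsistent. From $\rho\le C\sigma$ you only get $(1-\lambda)\sigma+\lambda\rho\le(1+\lambda(C-1))\sigma$. For states that (nearly) saturate the index this is sharp, so $\rho_\lambda\le2\sigma$ forces $\lambda\le1/(C-1)$, not $\lambda\sim1/(1+\log C)$. Grant your bulk inequality and feed this through convexity of the deficit: $\Delta(\rho)\ge\lambda^{-1}\Delta(\rho_\lambda)\gtrsim\lambda\kappa\,\chi^2_\sigma(\rho)\ge\lambda\kappa\,D(\Phi_*\rho\Vert\sigma)$. That is only a linear index loss. The hockey-stick bulk/tail split does not rescue this. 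It merely redistributes $D(\Phi_*\rho\Vert\sigma)$, and it presupposes that the window $\gamma\in[1/2,2]$ of Frenkel's integral decreases by a definite fraction. You give no mechanism that transfers a $\chi^2$ decrease of a mixture to that non-quadratic window.

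Second, the inequality $D(\tau\Vert\sigma)-D(\Phi_*\tau\Vert\sigma)\ge c\bigl(\chi^2_\sigma(\tau)-\chi^2_\sigma(\Phi_*\tau)\bigr)$ for $\tau\le2\sigma$ is asserted, not derived. Classically it follows from a pointwise Jensen-deficit argument, since $(x\log x)''\ge1/2$ on $(0,2]$. For noncommuting $\tau,\sigma$, however, the GNS $\chi^2$ on states is the largest monotone $\chi^2$-divergence, while the Hessian of $D$ is Kubo--Mori. Nothing in your plan compares their decrements under $\Phi_*$, and that comparison is exactly the noncommutative difficulty.

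The paper avoids both issues by using a \emph{lower} order bound, which costs nothing: $\widehat\rho=(\rho+\sigma)/2\ge\sigma/2$ for every $\rho$, and $\Delta(\rho)\ge2\Delta(\widehat\rho)$ by Lemma~\ref{lem:ep-convexity}. No $\chi^2$ appears.
\begin{itemize}
\item The Lesniewski--Ruskai completed-square identity bounds the resolvent decrease $\mathcal Q_u-\mathcal Q'_u$ below by the Kadison--Schwarz remainders $\Tr[\widehat\rho\,\mathcal B_\Phi((Z'_u)^*)]+u\Tr[\sigma\mathcal B_\Phi(Z'_u)]$.
\item The bound $\widehat\rho\ge\sigma/2$ replaces both weights by $\sigma$, so \eqref{eq:channel-variance} in $L_2(\sigma)$ applies to the conditional variances of $Z'_u$.
\item Lemma~\ref{lem:centering} uses the $\sigma$-weighted energy form \eqref{eq:reverse-energy} and a modular symmetry argument. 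It shows these centered terms retain half of the \emph{reverse} entropy $D(\sigma\Vert\Phi_*\widehat\rho)$, giving $\Delta(\rho)\ge\frac{\kappa}{2}D(\sigma\Vert\Phi_*\widehat\rho)$.
\item Only then does the index enter, through $D(\tau\Vert\sigma)\le(1+\log C)D(\sigma\Vert\tau)$ for $\tau\le C\sigma$, followed by $D(\Phi_*\widehat\rho\Vert\sigma)\ge\frac{1-\log2}{2}D(\Phi_*\rho\Vert\sigma)$ (Lemma~\ref{lem:entropy-comparisons}). This is where $c_0$ comes from.
\end{itemize}
The idea your proposal lacks is this conditional-variance control of the reverse entropy under a free lower order bound.
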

We emphasize that no detailed balance assumption is required above.
Under GNS symmetry, complete entropy contraction with logarithmic
index dependence was obtained in
\cite[Theorem~4.8 and Corollary~5.2]{GaoJungeLaRacuenteLi2025}.
Our Theorem~\ref{thm:channel} retains this dependence for general
faithful invariant states without detailed balance.

The entropy contraction also applies to a quantum Markov semigroup
(QMS) in continuous time setting. Let $P_t=e^{t\cL}$ be QMS admit a faithful asymptotic conditional expectation
$\displaystyle E=\lim_{t\to \infty} P_t$.
The modified logarithmic Sobolev inequality (MLSI) characterizes the exponential decay of relative entropy
\begin{align}
D(P_{t*}\rho\Vert E_*\rho)
 &\le e^{-2\alpha t}D(\rho\Vert E_*\rho),\qquad t\ge0,
\label{eq:intro-decay}
\end{align}
The optimal (largest) MLSI constant is denoted as $\alpha(\cL)$.
The complete modified logarithmic Sobolev inequality (CMLSI) requires the
same estimate after every matrix amplification, with optimal constant
$\alpha^{\mathrm c}(\cL)=\inf_{n\ge1}\alpha(\cL^{(n)})$;
see \cite[Definition~2.8(c)]{GaoJungeLaRacuente2020}.
For a faithful invariant state $\omega=E_*\omega$, the GNS spectral gap is
\begin{align}
\lambda_{\GNS}(\cL)
 &=\inf_{\substack{X\ne0\\E(X)=0}}
 \frac{-\Re\Tr(\omega X^*\cL(X))}{\Tr(\omega X^*X)}.
\end{align}
This is indeed the gap
of the GNS symmetrization relative to $\Ran E$, as noted in
\cite[Remark~3.9]{Wirth2026}, and more importantly, $\lambda_{\GNS}$ is independent of $\omega$ and stable
under matrix amplification.

The following is a continuous time version of Theorem~\ref{thm:channel}.
\begin{corollary}[MLSI bound]
\label{cor:main}
Let $P_t=e^{\cL t}$ be a quantum Markov semigroup with a faithful asymptotic conditional expectation
$\displaystyle E=\lim_{t\to \infty} P_t$. Suppose it has a positive GNS gap $\lambda_{\GNS}(\cL)>0$.
Then for any state $\rho$ and $t\ge 0$,
\[ D(P_{t*}\rho \Vert E_*\rho )\le e^{-2\alpha t} D(\rho\Vert E_*\rho  )\ ,\]
where the optimal constant $\alpha$ satisfies
\begin{align}
\alpha&\ge
\frac{c_0\lambda_{\GNS}(\cL)}{1+\log C(E)},
\end{align}
and $C(E)$ is the Pimsner-Popa index \eqref{eq:indices}.
The same bound holds for the amplification semigroup $P_t\otimes \id_{M_n}$ with $C(E)$
replaced by $C_{\cb}(E)$, uniformly for $n\ge 1$.
\end{corollary}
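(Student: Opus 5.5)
The plan is to derive Corollary~\ref{cor:main} from Theorem~\ref{thm:channel} applied to the channels $\Phi=P_\tau$ for small $\tau>0$, using the semigroup property to pass from one step to all times.

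\emph{Step 1: compatibility.} Since $E=\lim_{t\to\infty}P_t$, we have $P_\tau E=EP_\tau=E$. Indeed, $P_\tau P_t=P_tP_\tau=P_{t+\tau}\to E$, so $P_\tau E=EP_\tau=E$. Also $E^2=E$, and $E$ is a faithful conditional expectation onto the fixed-point algebra $\cN$. Hence each $P_\tau$ satisfies \eqref{eq:channel-compatibility}, and Theorem~\ref{thm:channel} gives
\[
D(P_{\tau*}\rho\Vert E_*\rho)\le \eta_\tau D(\rho\Vert E_*\rho),\qquad
\eta_\tau\le\Bigl(1+\tfrac{c_0(1-s_\tau^2)}{1+\log C(E)}\Bigr)^{-1},
\]
where $s_\tau=\norm{P_\tau-E}_{2,\omega\to2,\omega}$.

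\emph{Step 2: estimate $s_\tau$.} Let $Y=X-E(X)$, so $(P_\tau-E)X=P_\tau Y$ and $E(P_\tau Y)=0$. Set $f(t)=\norm{P_tY}_{2,\omega}^2$. Then
\[
f'(t)=2\Re\Tr\bigl(\omega (P_tY)^*\cL(P_tY)\bigr)\le -2\lambda_{\GNS}f(t),
\]
because $E(P_tY)=0$. Thus $f(\tau)\le e^{-2\lambda\tau}f(0)$ with $\lambda=\lambda_{\GNS}(\cL)$. Using $\norm{Y}_{2,\omega}\le\norm{X}_{2,\omega}$, which holds since $E$ is the $\omega$-orthogonal projection (because $E_*\omega=\omega$), we get $s_\tau^2\le e^{-2\lambda\tau}$. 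Therefore
\[
1-s_\tau^2\ge 1-e^{-2\lambda\tau}=2\lambda\tau+O(\tau^2).
\]

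\emph{Step 3: iterate and pass to the limit.} Write $K=1+\log C(E)$. Iterating Step 1 with $\tau=t/m$ and using $P_t=(P_{t/m})^m$ (with the reference $E_*\rho$ invariant at each step, since $E_*P_{\tau*}=E_*$) gives
\[
D(P_{t*}\rho\Vert E_*\rho)\le\Bigl(1+\tfrac{c_0(1-e^{-2\lambda t/m})}{K}\Bigr)^{-m}D(\rho\Vert E_*\rho).
\]
As $m\to\infty$, the prefactor tends to $\exp\bigl(-2c_0\lambda t/K\bigr)$. This yields $\alpha\ge c_0\lambda/K$.

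\emph{Complete version.} The same argument applies to $P_t\otimes\id_{M_n}$, using the complete clause of Theorem~\ref{thm:channel} with $C_{\cb}(E)$. This requires $\lambda_{\GNS}(\cL\otimes\id_{M_n})=\lambda_{\GNS}(\cL)$, which is the stated amplification stability (one may choose $\omega\otimes\Tr/n$ as the invariant state). The resulting constant is then uniform in $n$.

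The main obstacle is Step 2. The GNS gap is defined through the real part of a non-self-adjoint form, and one must check that it controls the operator norm of $P_\tau-E$ rather than merely the spectrum of $\cL$. The differential inequality for $f$ handles this exactly, provided $E$ commutes with $P_t$ and $E$ is $\omega$-orthogonal. Both facts follow from $E_*\omega=\omega$ and the compatibility established in Step 1.
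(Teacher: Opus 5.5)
Your proof is correct and follows the paper's route. Both apply Theorem~\ref{thm:channel} to $P_{t/m}$, iterate using $E_*P_{\tau*}=E_*$, let $m\to\infty$, and handle the complete case through amplification invariance. The one difference is how you bound $s(\tau)$: your Gr\"onwall argument on $\norm{P_tY}_{2,\omega}^2$ gives the non-asymptotic inequality $s(\tau)^2\le e^{-2\lambda_{\GNS}(\cL)\tau}$ directly from the definition of the gap, whereas the paper expands $P_h^\sharp P_h=\id+2h\cL_{\GNS}+O(h^2)$ on $\Ker E$ to get $s(h)^2=1-2\lambda_{\GNS}(\cL)h+O(h^2)$, so your version is slightly cleaner since it needs neither the GNS adjoint nor eigenvalue perturbation.
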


Under GNS symmetry, MLSI bounds with inverse index were obtained by Gao and Rouz\'e in
\cite{GaoRouze2022}, and logarithmic bounds with optimal index
dependence by Gao, Junge, LaRacuente and Li in \cite{GaoJungeLaRacuenteLi2025}.
Without detailed balance, the authors
proved a complete MLSI lower bound proportional to
$\lambda_{\GNS}(\cL)/C_{\cb}(E)$ \cite{GaoGuo2026}, analogous to \cite{GaoRouze2022} in the GNS-symmetric setting.
Our Corollary~\ref{cor:main} replaces this linear index loss by a logarithmic
one, which establishes the optimal index
dependence beyond the symmetric condition. For primitive KMS-symmetric semigroups, Liu, Wan and Wu
provide an MLSI bound quadratic in BKM coercivity \cite{LiuWanWu2026}.
For general primitive semigroups, a hypercontractive approach to MLSI bounds with logarithmic
index dependence was obtained in \cite{GaoWang2026}. Finite-time complete entropy decay for
bistochastic QMS was studied in \cite{LaRacuente2026}.

Corollary~\ref{cor:main} follows from iterating the one time estimate \eqref{eq:channel-main} and
taking the infinitesimal limit.
The core and mysterious ingredient in the proof of Theorem~\ref{thm:channel} is Lemma~\ref{lem:centering}.
For $\rho\ge aE_*\rho$, it shows that centering the
resolvent variance by $E$ retains at least an $a$ fraction of $D(E_*\rho\Vert\rho)$. This conditional variance
bound is combined with the nonnegative remainders in the channel monotonicity argument
of Lesniewski and Ruskai \cite[Theorems~2.5 and~2.15]{LesniewskiRuskai1999};
see also \cite[Sections~2.3 and~4.3]{Ruskai2007}.
The resulting entropy loss controls the reverse relative entropy $D(E_*\rho\Vert \Phi_*\rho)$ of the
output. The
index and the entropy comparisons of Gao and Zhao
\cite[Proposition~2.3]{GaoZhao2026} convert this estimate into,
entropy contraction, where the logarithmic loss enters only at this last comparison.

As applications, we discuss quantum channels and Markov semigroups from local transitions on an
energy ladder coupled to
squeezed reservoirs. Alternating the even and odd transitions gives a
channel whose ordinary and complete entropy contraction deficits are both
of order $d^{-1}$ for $d$-level system.
For continuous coupling to the same reservoirs, the ordinary and complete
MLSI constants are of order $d^{-1}$, uniformly up to ideal
squeezing (Proposition~\ref{prop:squeezed-ladder}).
Both applications include upper bounds with matching order.

Section~\ref{sec:preliminaries} collects the preliminaries on conditional expectations,
GNS contraction, and quantum relative entropy. Section~\ref{sec:channels} proves the key conditional
variance estimate Lemma~\ref{lem:centering} and Theorem~\ref{thm:channel}.
Section~\ref{sec:semigroups} derives the consequences on quantum Markov semigroup, and
Section~\ref{sec:applications} discusses examples of two reservoir models.

\noindent\textbf{AI Statement.}  The authors acknowledge assistance from AI tools during the
exploratory discussion and manuscript writing of this project. The main idea of the proof argument,
including the key Lemma~\ref{lem:centering}, were suggested by ChatGPT (GPT-6 Astra).
All mathematical arguments and proofs in the final manuscript were verified and written by the authors.

\noindent\textbf{Acknowledgement.}  Li Gao and Jingyu Guo are
supported in part by the National Natural Science Foundation of China (grant no.
12401163) and the Department of Science and Technology of Hubei Province (project nos.
2025EHA041 and 2025AFA044).
\section{Preliminaries}
\label{sec:preliminaries}

\subsection{Conditional expectations and relative entropy}
Throughout the paper, all Hilbert spaces are finite-dimensional, and $\Tr$ denotes the standard
matrix trace. For a linear map $T$ on $\cB(\cH)$, its trace adjoint is
determined by
\begin{align*}
\Tr[T_*(\rho)X]&=\Tr[\rho T(X)].
\end{align*}
Thus, a unital completely positive map $\Phi$ acts on observables, and its
trace pre-adjoint $\Phi_*$ is a trace-preserving quantum channel on states.
A conditional expectation onto a unital $*$-subalgebra
$\cN\subseteq\cB(\cH)$ is a unital completely positive projection
$E:\cB(\cH)\to\cN$ satisfying $E\circ E=E$. $E$ is called faithful if $E(X^*X)=0$ implies $X=0$,
and satisfies
\begin{align*}
E(AXB)&=AE(X)B,\qquad A,B\in\cN.
\end{align*}
For a linear map $T$, write
\begin{align}
T^{(n)}&=T\otimes\id_{M_n},\qquad M_n=\cB(\mathbb C^n).
\end{align}
Recall the Pimsner--Popa indices from \eqref{eq:indices}:
\begin{align*}
C(E)&=\inf\{C\ge1:\rho\le CE_*\rho\ \text{ for all }\ \rho\ge0\},
\qquad C_{\cb}(E)=\sup_{n\ge1}C(E\otimes \id_{M_n}).
\end{align*}
The orthogonal decomposition of a faithful conditional expectation and the
associated index formula
take the following form \cite[Proposition~3.2]{GaoZhao2026}.

\begin{proposition}[Structure of conditional expectations and index]

For a faithful conditional expectation $E:\cB(\cH)\to\cN$, there are
finite-dimensional spaces $\cK_k,\cR_k$ and faithful density matrices
$\sigma_k\in\cB(\cR_k)$ such that, up to unitary equivalence,
\begin{align}
\cH&=\bigoplus_k\cK_k\otimes\cR_k,
&\cN&=\bigoplus_k\cB(\cK_k)\otimes \mathbb{C}\Id_{\cR_k}.
\label{eq:blocks}
\end{align}
Writing $p_k$ for the block projections onto $\cK_k\otimes\cR_k$,
$\varphi_{\sigma_k}(Z)=\Tr(\sigma_kZ)$, and
$\eta_k=\Tr_{\cR_k}(p_k\rho p_k)$, we have
\begin{align}
E(X)&=\bigoplus_k
 \bigl[(\id_{\cB(\cK_k)}\otimes\varphi_{\sigma_k})(p_kXp_k)\bigr]
 \otimes\Id_{\cR_k},\\
E_*\rho&=\bigoplus_k\eta_k\otimes\sigma_k.
\end{align}
Consequently, the $E_*$-invariant states are exactly the states of the
latter form; such a state is faithful if and only if every $\eta_k>0$.
Moreover,
\begin{align}
C(E)\leq C_{\cb}(E)&=\sum_k\Tr(\sigma_k^{-1})<\infty.
\end{align}
\end{proposition}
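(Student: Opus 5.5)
The plan is to prove the structure result from the classification of finite-dimensional $*$-algebras together with the bimodule property of $E$, and then to read the index off the resulting block formula.

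\textbf{Step 1: the algebra $\cN$.} Since $\cN\subseteq\cB(\cH)$ is a unital $*$-subalgebra, it is a finite direct sum of full matrix algebras. Let $p_k$ be its minimal central projections. Each corner $p_k\cN p_k$ is a factor acting on $p_k\cH$. Up to unitary equivalence this gives
\[
p_k\cH\cong\cK_k\otimes\cR_k,\qquad p_k\cN p_k=\cB(\cK_k)\otimes\mathbb C\Id_{\cR_k},
\]
which is \eqref{eq:blocks}.

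\textbf{Step 2: the form of $E$.} Take $X\in\cB(\cH)$. The bimodule property with $A=p_k$, $B=p_l$ gives $E(p_kXp_l)=p_kE(X)p_l$. For $k\ne l$ this vanishes, because $\cN$ is block diagonal. So $E$ only sees the diagonal blocks.

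On the $k$-th block, the bimodule property over $\cB(\cK_k)\otimes\Id$ shows that $E$ commutes with left and right multiplication by $\cB(\cK_k)\otimes\Id$. Its restriction to $\cB(\cK_k)\otimes\cB(\cR_k)$ therefore has the form $\id\otimes\psi_k$ for a linear functional $\psi_k$ on $\cB(\cR_k)$. To see this, write $X=\sum_{ij}e_{ij}\otimes X_{ij}$ and use $E(e_{ii}\,\cdot\,e_{jj})$.

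Positivity and unitality of $E$ make $\psi_k$ a state, so $\psi_k=\varphi_{\sigma_k}$ for a density $\sigma_k$. Faithfulness of $E$ forces $\sigma_k>0$.

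\textbf{Step 3: the predual.} Taking the trace adjoint gives
\[
E_*\rho=\bigoplus_k\Tr_{\cR_k}(p_k\rho p_k)\otimes\sigma_k=\bigoplus_k\eta_k\otimes\sigma_k.
\]
Every state of this form is fixed by $E_*$, since $E_*$ is idempotent. Such a state is faithful if and only if each $\eta_k>0$, because each $\sigma_k>0$.

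\textbf{Step 4: the index.} This is the main obstacle. Two reductions come first.
\begin{itemize}
\item Since $E\otimes\id_{M_n}$ has the same structure with $\cK_k$ replaced by $\cK_k\otimes\mathbb C^n$ and the same $\sigma_k$, it suffices to show that $C(E)$ is bounded by $\sum_k\Tr\sigma_k^{-1}$ for every $n$, with equality attained for $n$ large.
\item Since $E_*\rho$ is block diagonal and $\rho\ge0$, the estimate $\rho\le m\sum_k p_k\rho p_k$ (with $m$ the number of blocks) does not give the sum directly. The sum should instead come out of a single vector computation.
\end{itemize}

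By convexity it is enough to treat pure $\rho=|\xi\rangle\langle\xi|$. Write $\xi=\sum_k\xi_k$ with $\xi_k\in\cK_k\otimes\cR_k$. The inequality $\rho\le CE_*\rho$ is equivalent to
\[
\langle\xi,(E_*\rho)^{-1}\xi\rangle\le C
\]
on the support. This reduces to $\sum_k\langle\xi_k,(\eta_k\otimes\sigma_k)^{-1}\xi_k\rangle$. Schmidt-decompose $\xi_k=\sum_i\sqrt{\lambda_i}\,u_i\otimes v_i$, so that $\eta_k=\sum_i\lambda_i|u_i\rangle\langle u_i|$. Then
\[
\langle\xi_k,(\eta_k\otimes\sigma_k)^{-1}\xi_k\rangle=\sum_i\langle v_i,\sigma_k^{-1}v_i\rangle\le\Tr\sigma_k^{-1},
\]
with equality once the Schmidt rank reaches $\dim\cR_k$. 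That rank is achievable after amplification by $n\ge\max_k\dim\cR_k$.

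Summing over $k$ gives $C_{\cb}(E)=\sum_k\Tr\sigma_k^{-1}$. The inequality $C(E)\le C_{\cb}(E)$ is immediate, since $n=1$ is one of the amplifications in the supremum.
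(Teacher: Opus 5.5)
Your argument is correct. The paper itself does not prove this proposition. It imports the block decomposition and the index formula from \cite[Proposition~3.2]{GaoZhao2026}, so your proof is a self-contained substitute rather than a variant of an argument in the text.

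Steps 1--3 are the standard structure theorem for finite-dimensional $*$-algebras combined with bimodularity of $E$. Step 4 is the informative part. For $\rho=|\xi\rangle\langle\xi|$ you get $\langle\xi,(E_*\rho)^{+}\xi\rangle=\sum_{k:\,\xi_k\ne0}\sum_i\langle v_i^{(k)},\sigma_k^{-1}v_i^{(k)}\rangle$. This depends only on the right Schmidt vectors, not on the block weights or the Schmidt coefficients. Combined with the Ky Fan maximum principle, the same computation gives two facts the paper does not record (it only states $C(E)\le C_{\cb}(E)$):
\begin{itemize}
\item the exact uncompleted index, $C(E)=\sum_k\sum_{i\le\min(\dim\cK_k,\dim\cR_k)}\lambda_i^{\downarrow}(\sigma_k^{-1})$;
\item $C(E\otimes\id_{M_n})$ reaches $\sum_k\Tr\sigma_k^{-1}$ once $n\ge\max_k\dim\cR_k$ and stays there.
\end{itemize}

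A few points to tidy up:
\begin{itemize}
\item In the first bullet of Step 4 you mean $C(E\otimes\id_{M_n})$, not $C(E)$. The second bullet is not part of the argument and can be dropped.
\item ``By convexity'' is really additivity. Write $\rho=\sum_i|\xi_i\rangle\langle\xi_i|$ and sum the rank-one inequalities, using linearity of $E_*$.
\item The equivalence between $|\xi\rangle\langle\xi|\le C\tau$ and $\langle\xi,\tau^{+}\xi\rangle\le C$ requires $\xi\in\Ran\tau$ with $\tau=E_*\rho$. Say explicitly why this holds: $\xi_k\in\Ran(\eta_k)\otimes\cR_k=\Ran(\eta_k\otimes\sigma_k)$, which uses $\sigma_k>0$.
\item To attain $\sum_k\Tr\sigma_k^{-1}$ you must choose every $\xi_k\ne0$ simultaneously, each with full Schmidt rank $\dim\cR_k$.
\item Since $\sum_k\Tr\sigma_k^{-1}\ge1$, the constraint $C\ge1$ in the definition of the index is harmless.
\end{itemize}
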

The order bound from finite index ensures that $D(\rho\Vert E_*\rho)$ is finite for
every state. For a faithful $E_*$-invariant state $\omega$, the chain
rule is
\begin{align}
D(\rho\Vert\omega)
 &=D(\rho\Vert E_*\rho)+D(E_*\rho\Vert\omega).
\end{align}
In particular, $\rho\mapsto D(\rho\Vert E_*\rho)$ is continuous on the entire state space. Thus
estimates proved for
faithful states passes to arbitrary states through perturbation
$\rho_\eps=(1-\eps)\rho+\eps\omega$.

\subsection{GNS contraction and conditional variance}

Let $E_*\omega=\omega>0$ be a faithful invariant state. For the GNS inner product
$\ip{X}{Y}_{2,\omega}=\Tr(\omega X^*Y)$, we define the conditional variance with respect to $\omega$ condition on the subalgebra $\mathcal{N}$
\begin{align}
\Var_{E,\omega}(X)&=\norm{X-EX}_{2,\omega}^2.
\end{align}
The bimodular property and invariance give, for $A\in\cN$,
\begin{align*}
\ip{A}{EX}_{2,\omega}
 &=\Tr[\omega E(A^*X)]
 =\Tr(\omega A^*X)
 =\ip{A}{X}_{2,\omega}.
\end{align*}
Hence $E$ is the GNS orthogonal projection onto $\cN$, and
\begin{align}
\Tr(\omega X^*X)
 &=\Var_{E,\omega}(X)+\Tr\big(\omega(EX)^*EX\big),\nonumber\\
\Tr(\omega XX^*)
 &=\Var_{E,\omega}(X^*)+\Tr\big(\omega(EX)(EX)^*\big).
\label{eq:orthogonal-decompositions}
\end{align}

\begin{proposition}[State and amplification invariance]
\label{prop:channel-stability}
Let $\Phi$ be unital and completely positive, with
$\Phi E=E\Phi=E$. Then
\begin{align*}
s&=\norm{\Phi-E}_{2,\omega\to2,\omega}\le 1
\end{align*}
is independent of the faithful $E_*$-invariant state $\omega$. Equivalently, for $\kappa=1-s^2$,
\begin{align}
\Tr\bigl[\omega\bigl(\Phi(X^*X)-\Phi(X)^*\Phi(X)\bigr)\bigr]
 &\ge\kappa\Var_{E,\omega}(X).
\label{eq:channel-variance}
\end{align}
Moreover, both $s$ and $\kappa$ are unchanged under matrix amplification,
\[ s =\norm{\Phi\otimes \id_{M_n}-E\otimes\id_{M_n}}_{2,\omega_n \to2,\omega_n}\]
for arbitrary faithful
states satisfying $(E_*\otimes\id_{M_n})(\omega_n)=\omega_n$.
\end{proposition}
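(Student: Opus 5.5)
We prove Proposition~\ref{prop:channel-stability} in three steps: the variance identity, independence of $\omega$, and amplification invariance. Throughout we use the block decomposition \eqref{eq:blocks}.

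\textbf{Step 1: the variance identity \eqref{eq:channel-variance} and $s\le1$.} Since $E\Phi=E$ and $\omega=E_*\omega$, the map $\Phi$ preserves $\omega$: $\Tr(\omega\Phi(Y))=\Tr(\omega E\Phi(Y))=\Tr(\omega Y)$. Hence
\[
\Tr\bigl[\omega\bigl(\Phi(X^*X)-\Phi(X)^*\Phi(X)\bigr)\bigr]=\norm{X}_{2,\omega}^2-\norm{\Phi X}_{2,\omega}^2 .
\]
Apply the orthogonal decomposition \eqref{eq:orthogonal-decompositions} to $X$ and to $\Phi X$. Because $E\Phi X=EX$, the $EX$ terms cancel and the left side equals
\[
\Var_{E,\omega}(X)-\norm{(\Phi-E)X}_{2,\omega}^2 .
\]
Here we used $\Phi X-E\Phi X=(\Phi-E)X$. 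Next, $(\Phi-E)X=(\Phi-E)(X-EX)$, since $\Phi E=E=E^2$. So $\Phi-E$ vanishes on $\cN$ and maps into $\Ker E$. Therefore
\[
s=\sup_{EX=0,\;X\ne0}\frac{\norm{\Phi X}_{2,\omega}}{\norm{X}_{2,\omega}},
\]
and \eqref{eq:channel-variance} with $\kappa=1-s^2$ is equivalent to the definition of $s$. Kadison--Schwarz gives $\Phi(X)^*\Phi(X)\le\Phi(X^*X)$, so the left side of \eqref{eq:channel-variance} is nonnegative; hence $s\le1$.

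\textbf{Step 2: independence of $\omega$.} Any faithful invariant state has the form $\omega=\bigoplus_k\eta_k\otimes\sigma_k$ with only the $\eta_k>0$ free. Write $\omega=E_*\omega'$-style as $\omega=h^{1/2}\omega_0h^{1/2}$, where $\omega_0=\bigoplus_k \frac{\Id_{\cK_k}}{\dim\cK_k}\otimes\sigma_k\cdot c$ is a fixed reference and $h=\bigoplus_k (\eta_k')\otimes\Id_{\cR_k}\in\cN$ is positive invertible. More simply, $\omega=a^*\omega_0 a$ with $a\in\cN$ invertible, commuting with $\omega_0$ blockwise; we only need $\norm{X}_{2,\omega}^2=\Tr(\omega_0\, a X^*X a^*)$ up to the commuting structure.

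The key observation is that $\Phi$ and $E$ are $\cN$-bimodular. For $E$ this is by definition. For $\Phi$, it follows from the multiplicative-domain argument: $\Phi$ is unital, completely positive and fixes $\cN$ pointwise, so $\cN$ lies in the multiplicative domain and $\Phi(AXB)=A\Phi(X)B$ for $A,B\in\cN$.

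Thus the map $X\mapsto X\,b$, for suitable invertible $b\in\cN$ built from the $\eta_k$, intertwines $\Phi-E$. It also transports $\norm{\cdot}_{2,\omega}$ to $\norm{\cdot}_{2,\omega_0}$: indeed $\Tr(\omega X^*X)=\Tr(\omega_0 (Xb)^*(Xb))$, where $b=\bigoplus_k(\eta_k(\dim\cK_k))^{1/2}\otimes\Id$ is chosen compatibly with the tensor structure on $\cK_k$. This is the step we expect to be the main obstacle. We must check that $\omega_0^{-1/2}\omega^{1/2}$ is implemented by right multiplication by an element of $\cN$, i.e.\ that $\omega=b^*\omega_0 b$ inside the trace with $b\in\cN$. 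We would verify it blockwise using $\omega|_{k}=\eta_k\otimes\sigma_k$ and $\omega_0|_k=\tau_k\otimes\sigma_k$, taking $b_k=\tau_k^{-1/2}\eta_k^{1/2}\otimes\Id$. Then
\[
\Tr(\omega X^*X)=\Tr\bigl((\tau^{1/2}\eta^{1/2}\cdots)\bigr),
\]
and cyclicity works because $\sigma_k$ acts on the $\cR_k$ factor while $b_k$ acts on $\cK_k$; a polar-type correction in $\cB(\cK_k)$ may be needed. Right multiplication by $b$ commutes with $\Phi-E$ by bimodularity, so the operator norms agree and $s$ does not depend on $\omega$.

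\textbf{Step 3: amplification invariance.} For $E\otimes\id_{M_n}$, the structure is the same with $\cK_k$ replaced by $\cK_k\otimes\mathbb C^n$. By Step~2 it therefore suffices to take $\omega_n=\omega\otimes\frac{\Id}{n}$. Then $L^2(\omega_n)=L^2(\omega)\otimes \HS_n$ with the Hilbert--Schmidt norm on the second factor, and $(\Phi-E)\otimes\id$ acts as $(\Phi-E)\otimes \Id$ on this tensor product of Hilbert spaces. Its norm is therefore exactly $s$.
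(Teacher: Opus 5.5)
Your proof is correct. Step 1 is the paper's argument: Kadison--Schwarz plus the orthogonal splitting with $E\Phi X=EX$. You also make explicit the reduction $(\Phi-E)X=(\Phi-E)(X-EX)$, which the paper uses tacitly.

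For state and amplification invariance you take a genuinely different route. The paper diagonalizes each $\eta_l$ and splits $X$ into GNS-orthogonal pieces $|\phi_{ki}\rangle\langle\phi_{lj}|\otimes X_{ki,lj}$. From this it derives the explicit formula $s^2=\max_{k,l}\norm{\Phi_{kl}|_{\mathcal V_{kl}}}^2$ and reads both invariances off it. You instead argue with right multiplication $X\mapsto Xb$ by an invertible $b\in\cN$. It commutes with $\Phi-E$ by right $\cN$-modularity (multiplicative domain). It is a unitary $L_2(\omega)\to L_2(\omega_0)$ once $b\omega_0b^*=\omega$, since $\Tr(\omega_0b^*X^*Xb)=\Tr(b\omega_0b^*X^*X)$.

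The step you flag as the main obstacle is not one. Your choice of $b$ works exactly because $\tau_k$ is scalar. More simply, take any two faithful invariant states $\omega=\bigoplus_k\eta_k\otimes\sigma_k$ and $\omega'=\bigoplus_k\eta_k'\otimes\sigma_k$. Then $b=\bigoplus_k(\eta_k')^{1/2}\eta_k^{-1/2}\otimes\Id_{\cR_k}\in\cN$ satisfies $b\omega b^*=\omega'$. So no reference state, normalization, or polar correction is needed, and you should replace the garbled display by this one-line identity.

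Your amplification step is sound: reduce to $\omega\otimes\Id/n$ by applying Step 2 to $E\otimes\id_{M_n}$, then use $\norm{T\otimes\Id}=\norm{T}$ on a Hilbert tensor product.

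Your route is shorter and coordinate-free. It also shows that the $L_2(\omega)$-norm of any right $\cN$-modular map is state independent. The paper's computation instead gives an explicit block formula for $s$ in terms of the reduced maps $\Phi_{kl}$ on the multiplicity spaces, which makes amplification invariance immediate.
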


\begin{proof}
The case $E=\id$ is immediate, so assume $E\ne\id$.
Since $\Phi_*\omega=\Phi_*E_*\omega=E_*\omega=\omega$, Kadison-Schwarz inequality gives
\begin{align*}
\norm{\Phi X}_{2,\omega}^2=\Tr[\omega\Phi(X)^*\Phi(X)]
 &\le\Tr[\omega\Phi(X^*X)]
 =\Tr[(\Phi_*\omega)X^*X]
 =\norm{X}_{2,\omega}^2.
\end{align*}
Thus $s\le1$. Since $EX=E\Phi(X)$, by orthogonality
\begin{align*}
&\Tr\bigl[\omega\bigl(\Phi(X^*X)-\Phi(X)^*\Phi(X)\bigr)\bigr]\\
 =&\norm{X}_{2,\omega}^2-\norm{\Phi X}_{2,\omega}^2\\
 =&\norm{X-EX}_{2,\omega}^2+\norm{EX}_{2,\omega}^2-\norm{\Phi X-EX}_{2,\omega}^2-\norm{EX}_{2,\omega}^2\\
 =&\norm{X-EX}_{2,\omega}^2-\norm{\Phi X-EX}_{2,\omega}^2\\
 \ge&(1-s^2)\norm{X-EX}_{2,\omega}^2
 =\kappa\Var_{E,\omega}(X).
\end{align*}
For the state invariance, we follow the similar argument for Lindbladians \cite[Sections~4 and~5]{GaoGuo2026}.
Since $\cN$ lies in the multiplicative domain of $\Phi$, so $\Phi$ is
$\cN$-bimodular. Following \eqref{eq:blocks}, write
\begin{align*}
\Phi\big|_{\cB(\cK_l,\cK_k)\otimes\cB(\cR_l,\cR_k)}
 =&\id\otimes\Phi_{kl},\\
\mathcal V_{kl}
 =&\begin{cases}
 \cB(\cR_l,\cR_k),&k\ne l,\\
 \{Z\in\cB(\cR_k):\Tr(\sigma_kZ)=0\},&k=l,
 \end{cases}\\
 \norm{Z}_{kl}^2=&\Tr(\sigma_lZ^*Z)\ ,\  Z\in\cB(\cR_l,\cR_k).
\end{align*}
Here $\norm{\cdot}_{kl}$ is the GNS norm on $\cB(\cR_l,\cR_k)$ with respect to the density $\sigma_l$.
Diagonalize each $\eta_l=\sum_{j} \lambda_{lj} |\phi_{lj} \rangle \langle\phi_{lj}|$ in
$\omega=\bigoplus_l\eta_l\otimes\sigma_l$, with eigenvalues
$\lambda_{lj}>0$.
Write
\[
X=\sum_{k,l,i,j}|\phi_{ki}\rangle\langle\phi_{lj}|
  \otimes X_{ki,lj},
\qquad X_{ki,lj}\in\cB(\cR_l,\cR_k).
\]
These components are GNS-orthogonal, with
\[
\norm{|\phi_{ki}\rangle\langle\phi_{lj}|\otimes Z}_{2,\omega}^2
=\lambda_{lj}\Tr(\sigma_lZ^*Z),
\qquad Z\in\cB(\cR_l,\cR_k).
\]
Moreover, $E$ annihilates the off-diagonal corners and applies
$Z\mapsto\Tr(\sigma_kZ)\Id_{\cR_k}$ to each entry of the
$(k,k)$ corner. Hence
\begin{align}
\norm{X}_{2,\omega}^2
 =&\sum_{k,l,i,j}\lambda_{lj}\norm{X_{ki,lj}}_{kl}^2,
\\
EX=&0\quad\Longleftrightarrow\quad
X_{ki,lj}\in\mathcal V_{kl}\text{ for all }k,l,i,j.\nonumber
\end{align}
It follows that
\begin{align*}
s^2
 &=\sup_{\substack{X\ne0\\EX=0}}
   \frac{\sum_{k,l,i,j}\lambda_{lj}
       \norm{\Phi_{kl}(X_{ki,lj})}_{kl}^2}
        {\sum_{k,l,i,j}\lambda_{lj}\norm{X_{ki,lj}}_{kl}^2}\\
 &=\max_{k,l:\,\mathcal V_{kl}\ne\{0\}}
   \norm{\Phi_{kl}|_{\mathcal V_{kl}}}_{kl\to kl}^2.
\end{align*}
This expression does not involve the $\eta_l$, which implies the states invariance. It is also unchanged under
amplification, which only replaces $\cK_k$ by $\mathbb C^n\otimes\cK_k$
and leaves all component norms $\|\cdot\|_{kl}$ unchanged.
\end{proof}

\subsection{Resolvent representations of relative entropy}

We recall the relative $f$-entropy formula of Lesniewski and
Ruskai from \cite{LesniewskiRuskai1999}. Fix faithful states $\rho,\sigma$.
On Hilbert--Schmidt space, with
$\ip{X}{Y}_{\HS}=\Tr(X^*Y)$, write
\begin{align*}
\mathsf L_B(X)&=BX,
&\mathsf R_B(X)&=XB,
&\mathsf{M}_u&=\mathsf L_\rho+u\mathsf R_\sigma.
\end{align*}
For $u>0$, $\mathsf M_u$ is positive definite. Define
\begin{align}
Z_u&=\mathsf{M}_u^{-1}(\rho-\sigma),
&\mathcal Q_u&=\ip{\rho-\sigma}{Z_u}_{\HS}=\ip{\rho-\sigma}{\mathsf{M}_u^{-1}(\rho-\sigma)}_{\HS}.
\label{eq:resolvent-definitions}
\end{align}
Equivalently, $Z_u$ is the unique solution of
\begin{align}
\rho Z_u+uZ_u\sigma&=\rho-\sigma.
\end{align}
The variational formula is
\begin{align}
\mathcal Q_u=&\ip{Z_u}{\mathsf{M}_u Z_u}_{\HS}
 =\Tr(\rho Z_uZ_u^*)+u\Tr(\sigma Z_u^*Z_u)\ge0,
\\
\mathcal Q_u=&        \max_Z\bigl\{2\Re\ip{\mathsf{M}_u Z_u}{Z}_{\HS}
       -\ip{Z}{\mathsf{M}_u Z}_{\HS}\bigr\}\nonumber\\
 =&\max_Z\bigl\{2\Re\ip{\rho-\sigma}{Z}_{\HS}
       -\Tr(\rho ZZ^*)-u\Tr(\sigma Z^*Z)\bigr\}.
\label{eq:variational}
\end{align}
Indeed, the completion of squares used in
\cite[Section~2.4]{Ruskai2007} reads
\begin{align*}
2\Re\ip{\rho-\sigma}{Z}_{\HS}-\ip{Z}{\mathsf{M}_uZ}_{\HS}
 &=\mathcal Q_u-\ip{Z-Z_u}{\mathsf{M}_u(Z-Z_u)}_{\HS}.
\end{align*}
The maximizer is $Z_u$, which need not be self-adjoint.

The relative entropy representations below are logarithmic
specializations from \cite[Definition~2.1 and Theorem~2.5]{LesniewskiRuskai1999}.
For an operator convex function $f$ with $f(1)=0$, their notation is
\begin{align*}
H_f(\sigma,\rho)
 =&\ip{\sigma^{1/2}}{f(\Delta_{\rho,\sigma})(\sigma^{1/2})}_{\HS}, &\Delta_{\rho,\sigma}&=\mathsf L_\rho\mathsf R_\sigma^{-1}, \\
H_{t\log t}(\sigma,\rho)&=D(\rho\Vert\sigma),&\\
H_{-\log t}(\sigma,\rho)&=D(\sigma\Vert\rho),&\\
H_{h_u}(\sigma,\rho)&=\mathcal Q_u, &h_u(t)&=\frac{(t-1)^2}{t+u},
\end{align*}
The reverse formula $D(\sigma\Vert\rho)$ is also \cite[Eq.~(7)]{Ruskai2007} with
$P=\sigma$, $Q=\rho$. We derive its equivalent energy form
\eqref{eq:reverse-energy}, whose two quadratic terms are weighted by
$\sigma$ and can therefore be centered using
\eqref{eq:orthogonal-decompositions}.

\begin{lemma}[Resolvent representations]

For faithful states $\rho$ and $\sigma$,
\begin{align}
D(\rho\Vert\sigma)
 &=\int_0^\infty\frac{u}{(1+u)^2}\mathcal Q_u\dd u,
\label{eq:forward-resolvent}\\
D(\sigma\Vert\rho)
 &=\int_0^\infty\frac{1}{(1+u)^2}\mathcal Q_u\dd u
\label{eq:reverse-resolvent}\\
 &=\int_0^\infty\frac{u}{(1+u)^2}
   \bigl[\Tr(\sigma Z_uZ_u^*)+u\Tr(\sigma Z_u^*Z_u)\bigr]\dd u.
\label{eq:reverse-energy}
\end{align}
\end{lemma}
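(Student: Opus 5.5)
The plan is to diagonalize the problem through the relative modular operator $\Delta=\Delta_{\rho,\sigma}=\mathsf L_\rho\mathsf R_\sigma^{-1}$, which is positive definite on Hilbert--Schmidt space. We write all three quantities as spectral integrals against one scalar measure. Let $\mu$ be the spectral measure of $\Delta$ in the vector $\sigma^{1/2}$, so that
\[
H_f(\sigma,\rho)=\int f(t)\dd\mu(t).
\]
We first verify the identification $H_{h_u}(\sigma,\rho)=\mathcal Q_u$. Note that
\[
\rho-\sigma=(\Delta-1)\mathsf R_\sigma(\Id)
\quad\text{and}\quad
\mathsf M_u=(\Delta+u)\mathsf R_\sigma .
\]
Working in the symmetric form with $\sigma^{1/2}$, this gives
\[
\mathcal Q_u=\ip{\sigma^{1/2}}{(\Delta-1)(\Delta+u)^{-1}(\Delta-1)\sigma^{1/2}}_{\HS}.
\]
To get this cleanly, we use that $\mathsf L$ and $\mathsf R$ commute and are HS-self-adjoint, and we write $\rho-\sigma=\mathsf R_{\sigma^{1/2}}(\Delta-1)\sigma^{1/2}$. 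We also have $\mathsf M_u^{-1}=\mathsf R_{\sigma^{1/2}}^{-1}(\Delta+u)^{-1}\mathsf R_{\sigma^{1/2}}^{-1}$.

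With this in hand, \eqref{eq:forward-resolvent} and \eqref{eq:reverse-resolvent} reduce to scalar identities, valid for all $t>0$:
\[
t\log t=\int_0^\infty\frac{u}{(1+u)^2}\frac{(t-1)^2}{t+u}\dd u,
\qquad
-\log t=\int_0^\infty\frac{1}{(1+u)^2}\frac{(t-1)^2}{t+u}\dd u,
\]
each holding up to an affine term $a(t-1)$. That affine term integrates to zero against $\mu$, because $\int(t-1)\dd\mu=\Tr\rho-\Tr\sigma=0$. We check the scalar identities by partial fractions in $u$, or by differentiating twice in $t$: both sides are equal at $t=1$ together with their first derivatives, and the second derivatives agree since $\partial_t^2\frac{(t-1)^2}{t+u}=\frac{2(1+u)^2}{(t+u)^3}$, which yields $\int_0^\infty \frac{2u}{(t+u)^3}\dd u=1/t$ and $\int_0^\infty\frac{2}{(t+u)^3}\dd u=1/t^2$. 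Fubini applies because all integrands are nonnegative.

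The main step is \eqref{eq:reverse-energy}. From the defining equation, $\rho Z_u=\rho-\sigma-uZ_u\sigma$. Substituting this into $\mathcal Q_u=\Tr(\rho Z_uZ_u^*)+u\Tr(\sigma Z_u^*Z_u)$ gives
\[
\Tr(\rho Z_uZ_u^*)=\Tr((\rho-\sigma)Z_u^*)-u\Tr(\sigma Z_u^*Z_u).
\]
This alone is not enough, so I would instead argue spectrally. The plan is to show that the $\sigma$-weighted energy $\Tr(\sigma Z_uZ_u^*)+u\Tr(\sigma Z_u^*Z_u)$ equals
\[
\int \frac{(t-1)^2(1+ut)}{t(t+u)^2}\dd\mu(t)
\]
or a similar explicit kernel. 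We compute each trace in the modular calculus: with $W=(\Delta+u)^{-1}(\Delta-1)\sigma^{1/2}$, one gets $Z_u=W\sigma^{-1/2}$. Then $\Tr(\sigma Z_u^*Z_u)=\norm{W}_{\HS}^2$ gives the kernel $\frac{(t-1)^2}{(t+u)^2}$, and $\Tr(\sigma Z_uZ_u^*)=\ip{W}{\Delta^{-1}W}$ after conjugating with $\mathsf L_\rho$. Hence the bracket has kernel $\frac{(t-1)^2(t^{-1}+u)}{(t+u)^2}$. What remains is the scalar identity
\[
\int_0^\infty\frac{u(1+ut)}{t(1+u)^2(t+u)^2}\dd u=\int_0^\infty\frac{\dd u}{(1+u)^2(t+u)},
\]
after dividing by $(t-1)^2$. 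I expect this bookkeeping to be the only genuine obstacle: correctly identifying $\Tr(\sigma Z_uZ_u^*)$ with $\Delta^{-1}$, and then verifying the elementary integral. For the integral I would use the substitution $u\mapsto 1/u$ (or $u\mapsto t/u$), which exchanges the weights $u$ and $1$ and so should match the two sides; failing that, a direct partial-fraction evaluation of both sides as functions of $t$ will settle it.
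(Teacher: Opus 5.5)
Your derivation of \eqref{eq:forward-resolvent} and \eqref{eq:reverse-resolvent} is correct and essentially matches the paper's. It uses the same scalar representations of $t\log t-t+1$ and $t-1-\log t$ against $h_u$, with $H_{h_u}(\sigma,\rho)=\mathcal Q_u$; you check this directly from $\mathsf M_u=(\Delta+u)\mathsf R_\sigma$, where the paper cites Lesniewski--Ruskai.

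The gap is in \eqref{eq:reverse-energy}, at the step you flagged. The identification $\Tr(\sigma Z_uZ_u^*)=\ip{W}{\Delta^{-1}W}_{\HS}$ is false. From $Z_u=W\sigma^{-1/2}$,
\[
\Tr(\sigma Z_uZ_u^*)=\Tr(W^*\sigma W\sigma^{-1})=\ip{W}{\mathsf L_\sigma\mathsf R_\sigma^{-1}W}_{\HS},
\qquad\text{whereas}\qquad
\ip{W}{\Delta^{-1}W}_{\HS}=\Tr(W^*\rho^{-1}W\sigma).
\]
The two differ already when $\rho,\sigma$ commute. Then $W$ is diagonal and the first expression equals $\norm{W}_{\HS}^2$, so the bracket has kernel $(1+u)(t-1)^2/(t+u)^2$, not $(t^{-1}+u)(t-1)^2/(t+u)^2$. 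The scalar identity you intend to verify is therefore false. Write $I_k(t)=\int_0^\infty u^k(1+u)^{-2}(t+u)^{-2}\dd u$. Your left-hand side is $I_1/t+I_2$, while the right-hand side equals $I_1+I_2=\frac{1}{t-1}-\frac{\log t}{(t-1)^2}$. They differ by $(1-t)I_1(t)/t\neq0$ for $t\neq1$, so no substitution or partial fractions will close it. With the corrected kernel, the commutative case does go through.

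The noncommutative case cannot be repaired by a better kernel. Since $[\mathsf L_\sigma\mathsf R_\sigma^{-1},\Delta]=\mathsf L_{[\sigma,\rho]}\mathsf R_\sigma^{-2}$, the term $\Tr(\sigma Z_uZ_u^*)$ is not of the form $\int k_u\dd\mu$. The commutative case would force $k_u(t)=(t-1)^2/(t+u)^2$ modulo multiples of $t-1$, hence $\Tr(\sigma Z_uZ_u^*)=\Tr(\sigma Z_u^*Z_u)=\norm{W}_{\HS}^2$. This already fails for $\sigma=\mathrm{diag}(s_1,s_2)$ with $s_1\neq s_2$, $\rho=\sigma+\eps(|1\rangle\langle2|+|2\rangle\langle1|)$ and $u\neq1$: to leading order the difference is $\eps^2(s_1-s_2)\bigl[(s_1+us_2)^{-2}-(s_2+us_1)^{-2}\bigr]\neq0$.

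So \eqref{eq:reverse-energy} holds only after integrating in $u$, and you need an argument that is not pointwise in $u$. The paper differentiates along $\rho_t=t\rho+(1-t)\sigma$ at $t=1$:
\begin{itemize}
\item The variational formula \eqref{eq:variational} (envelope argument) gives $\dot{\mathcal Q}_u=2\mathcal Q_u-\Tr((\rho-\sigma)Z_uZ_u^*)=\mathcal Q_u+\Tr(\sigma Z_uZ_u^*)+u\Tr(\sigma Z_u^*Z_u)$.
\item On the other side, $\frac{\dd}{\dd t}D(\rho_t\Vert\sigma)|_{t=1}=D(\rho\Vert\sigma)+D(\sigma\Vert\rho)$.
\item Differentiating \eqref{eq:forward-resolvent} under the integral, which is justified by $\norm{\mathsf M_u(t)^{-1}}\le(m(1+u))^{-1}$, and subtracting $D(\rho\Vert\sigma)$ yields \eqref{eq:reverse-energy}.
\end{itemize}
In this way $\Tr(\sigma Z_uZ_u^*)$ enters only as part of a derivative of $\mathcal Q_u$ and is never computed spectrally.
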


\begin{proof} Note that
$H_{t-1}(\sigma,\rho)=\Tr(\rho-\sigma)=0$, and
\[ H_{t\log t}(\sigma,\rho)=H_{t\log t-t+1}(\sigma,\rho)\ , \ H_{-\log t}(\sigma,\rho)=H_{t-1-\log t}(\sigma,\rho)\ .\]
Then \eqref{eq:forward-resolvent}--\eqref{eq:reverse-resolvent} follows applying spectral calculus
to the integral representations
\begin{align*}
t\log t-t+1
 &=\int_0^\infty\frac{u\,h_u(t)}{(1+u)^2}\dd u,\\
t-1-\log t
 &=\int_0^\infty\frac{h_u(t)}{(1+u)^2}\dd u.
\end{align*}
For \eqref{eq:reverse-energy}, set
$\rho_t=t\rho+(1-t)\sigma$ and differentiate at $t=1$.
With $A_t=\rho_t-\sigma, A=\rho-\sigma$ and
$\mathsf{M}_u(t)=\mathsf L_{\rho_t}+u\mathsf R_\sigma$,
\begin{align*}
\dot A&=A,\\
\dot{\mathsf{M}}_u&=\mathsf L_A,\\
\frac{\dd}{\dd t}\mathsf{M}_u(t)^{-1}\Big|_{t=1}
 &=-\mathsf{M}_u^{-1}\dot{\mathsf{M}}_u \mathsf{M}_u^{-1},\\
\dot{\mathcal Q}_u
 &=2\Re\ip{\dot A}{Z_u}_{\HS}
   -\ip{Z_u}{\dot{\mathsf{M}}_u(Z_u)}_{\HS}\\
 &=2\mathcal Q_u-\Tr(AZ_uZ_u^*).
\end{align*}
Therefore
\begin{align}
\dot{\mathcal Q}_u
 &=\mathcal Q_u+\Tr(\sigma Z_uZ_u^*)
                   +u\Tr(\sigma Z_u^*Z_u).
\label{eq:mixing-derivative}
\end{align}
On the other hand,
\begin{align*}
\frac{\dd}{\dd t}D(\rho_t\Vert\sigma)\Big|_{t=1}
 &=\Tr[(\rho-\sigma)(\log\rho-\log\sigma)]\\
 &=D(\rho\Vert\sigma)+D(\sigma\Vert\rho).
\end{align*}
Faithfulness gives $\rho_t,\sigma\ge m\Id$ for some $m>0$, hence
\begin{align*}
\norm{\mathsf{M}_u(t)^{-1}}_{\HS\to\HS}
 &\le\frac1{m(1+u)},
&\frac{u}{(1+u)^2}|\partial_t\mathcal Q_u(t)|
 &\le\frac{K u}{(1+u)^3}.
\end{align*}
The latter bound is integrable. Differentiating
\eqref{eq:forward-resolvent} under the integral and using
\eqref{eq:mixing-derivative} gives
\begin{align*}
D(\sigma\Vert\rho)&=\frac{\dd}{\dd t}D(\rho_t\Vert\sigma)\Big|_{t=1}-D(\rho\Vert\sigma)\\
 &=\int_0^\infty\frac{u}{(1+u)^2}
       [\dot{\mathcal Q}_u-\mathcal Q_u]\dd u\\
 &=\int_0^\infty\frac{u}{(1+u)^2}
       [\Tr(\sigma Z_uZ_u^*)+u\Tr(\sigma Z_u^*Z_u)]\dd u. \qedhere
\end{align*}
\end{proof}

\subsection{Entropy comparisons and convexity}

We use the following consequences from
\cite[Proposition~2.3]{GaoZhao2026}.
\begin{lemma}[Entropy comparisons]
\label{lem:entropy-comparisons}
For faithful states $\rho,\sigma$, if $\rho\le C\sigma$ with $C>1$,
then
\begin{align}
D(\rho\Vert\sigma)
 &\le(1+\log C)D(\sigma\Vert\rho).
\end{align}
For $0<\theta<1$ and $\rho_\theta=(1-\theta)\sigma+\theta\rho$,
\begin{align}
D(\rho_\theta\Vert\sigma)
 &\ge\left(\theta+(1-\theta)\log(1-\theta)\right)D(\rho\Vert\sigma).
\end{align}
\end{lemma}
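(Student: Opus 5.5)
The plan is to reduce both inequalities to scalar inequalities through the Lesniewski--Ruskai representation $H_f(\sigma,\rho)=\ip{\sigma^{1/2}}{f(\Delta_{\rho,\sigma})\sigma^{1/2}}_{\HS}$ and, where that is not enough, the resolvent formulas \eqref{eq:forward-resolvent}--\eqref{eq:reverse-resolvent}. I begin with the second inequality, which I expect to be the easier one.

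\textbf{The convexity inequality.} Since $\mathsf L_{\rho}$ and $\mathsf R_\sigma$ commute,
\[
\Delta_{\rho_\theta,\sigma}=\mathsf L_{\rho_\theta}\mathsf R_\sigma^{-1}=\theta\Delta_{\rho,\sigma}+(1-\theta)\Id .
\]
Let $f(t)=t\log t-t+1\ge0$, and recall that $H_{t\log t}=H_f$ because $H_{t-1}=0$. Then
\[
D(\rho_\theta\Vert\sigma)=H_g(\sigma,\rho),\qquad g(t)=f\bigl(1-\theta+\theta t\bigr).
\]
So it suffices to prove the pointwise inequality $g(t)\ge c(\theta)f(t)$ for $t>0$, where $c(\theta)=\theta+(1-\theta)\log(1-\theta)$. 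Integrating it against the spectral measure of $\Delta_{\rho,\sigma}$ in the vector $\sigma^{1/2}$ then gives the claim.

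At $t=0$ the scalar inequality is an equality. At $t=1$ both sides vanish to second order, with $g''(1)=\theta^2 f''(1)$ and $\theta^2\ge c(\theta)$. For large $t$ the ratio $g/f\to\theta$, and $\theta\ge c(\theta)$. I would prove the inequality on all of $(0,\infty)$ by showing that $t\mapsto g(t)/f(t)$ attains its minimum at $t=0$. One route is to check that $g-c f$ is convex on $[0,1]$ and $[1,\infty)$ separately, using $g''(t)=\theta^2/(1-\theta+\theta t)$ and $f''(t)=1/t$. An alternative is to write
\[
g(t)-cf(t)=\int_0^1 (1-s)\,\bigl[\theta^2 f''(1+s\theta(t-1))-c\,f''(1+s(t-1))\bigr](t-1)^2\dd s
\]
and compare the integrands.

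\textbf{The index inequality.} Here $\rho\le C\sigma$ does not give $\Delta_{\rho,\sigma}\le C$, so pure spectral calculus is not available. Since $f(t)/(t-1-\log t)$ is unbounded, this is the main obstacle. I would therefore work with $\mathcal Q_u$ from \eqref{eq:resolvent-definitions}, using $D(\rho\Vert\sigma)=\int_0^\infty \frac{u}{(1+u)^2}\mathcal Q_u\dd u$ and $D(\sigma\Vert\rho)=\int_0^\infty\frac{1}{(1+u)^2}\mathcal Q_u\dd u$, and split the $u$-integral at $u=C$.

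On $[0,1]$ the forward weight is at most the reverse weight, since $u\le 1$. On $[1,\infty)$ I would use two monotonicity facts, both following from operator monotonicity of inversion on HS space: $u\mapsto\mathcal Q_u$ is nonincreasing, and $u\mapsto u\mathcal Q_u=\ip{A}{(u^{-1}\mathsf L_\rho+\mathsf R_\sigma)^{-1}A}_{\HS}$ is nondecreasing, where $A=\rho-\sigma$. The hypothesis enters through $\mathsf L_\rho\le C\mathsf L_\sigma$ on HS space. With $\mathsf L_\sigma,\mathsf R_\sigma$ commuting, this gives
\[
u\mathcal Q_u\le \ip{A}{\mathsf R_\sigma^{-1}A}_{\HS}
\quad\text{and}\quad
\mathcal Q_v\ge\ip{A}{(C\mathsf L_\sigma+v\mathsf R_\sigma)^{-1}A}_{\HS}.
\]
From these I want the comparison $u\mathcal Q_u\lesssim(1+\cdots)\,\mathcal Q_v$ for $v\le 1\le u$, up to the factor coming from $C$.

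The forward integral over $[1,C]$ then contributes $\int_1^C\frac{\dd u}{1+u}\lesssim\log C$ times a quantity bounded by the reverse integral over small $u$. The tail $u\ge C$ is controlled directly by the reverse integrand, because there $\mathsf M_u\approx u\mathsf R_\sigma$ up to a factor of $2$. Matching constants exactly to obtain $1+\log C$, rather than a multiple of it, is the delicate part. If the split does not close, I would fall back on writing $\sigma=C^{-1}\rho+(1-C^{-1})\tau$ with $\tau=(C\sigma-\rho)/(C-1)\ge0$. I would then apply the first step in reverse to $\sigma$ as a convex combination, together with joint convexity of $D$, to compare $D(\rho\Vert\sigma)$ and $D(\sigma\Vert\rho)$ by a scalar optimization in $C$.
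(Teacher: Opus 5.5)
Your route has genuine gaps in both parts. Note that the paper does not prove this lemma itself. It imports both inequalities from \cite[Proposition~2.3]{GaoZhao2026}, where the first bound comes with the constant $\frac{C\log C-C+1}{C-1-\log C}$, and then only checks that this constant is at most $1+\log C$ via $2(e^x-1-x)\ge x^2$ with $x=\log C$.

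\textbf{Convexity part.} The identity $\Delta_{\rho_\theta,\sigma}=\theta\Delta_{\rho,\sigma}+(1-\theta)\id$ is false. Linearity gives $\Delta_{\rho_\theta,\sigma}=\theta\Delta_{\rho,\sigma}+(1-\theta)\mathsf L_\sigma\mathsf R_\sigma^{-1}$, and $\mathsf L_\sigma\mathsf R_\sigma^{-1}\colon X\mapsto\sigma X\sigma^{-1}$ is the modular operator of $\sigma$. It equals $\id$ only if $\sigma\propto\Id$, and it commutes with $\Delta_{\rho,\sigma}$ only if $[\rho,\sigma]=0$. So $f(\Delta_{\rho_\theta,\sigma})$ is not a function of $\Delta_{\rho,\sigma}$, and your spectral reduction only covers commuting pairs. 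The identity you wrote is the one satisfied by $\sigma^{-1/2}\rho\sigma^{-1/2}$, so your argument proves the inequality for the Belavkin--Staszewski entropy, not for $D$. This is the same noncommutativity you correctly flagged in the index part.

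The scalar step is not closed as sketched either. Since $(g-cf)''(t)=\theta^2/(1-\theta+\theta t)-c/t\to-\infty$ as $t\to0^+$, $g-cf$ is not convex on $[0,1]$. For the same reason, the Taylor integrands cannot be compared pointwise near $t=0$. The scalar inequality $g\ge cf$ is true, but it needs a different argument.

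\textbf{Index part.} The hypothesis $\rho\le C\sigma$ only gives $\mathsf L_\rho\le C\mathsf L_\sigma$, whereas $\mathcal Q_u$ involves $\mathsf R_\sigma$, and $\mathsf L_\sigma$, $\mathsf R_\sigma$ are not comparable. In the basis $|i\rangle\langle j|$ diagonalizing $\sigma$, your two bounds are $\sum_{ij}|A_{ij}|^2/\sigma_j$ and $\sum_{ij}|A_{ij}|^2/(C\sigma_i+v\sigma_j)$. Their ratio is governed by the eigenvalue ratios $\sigma_i/\sigma_j$, not by $C$. The same objection rules out $\mathsf M_u\approx u\mathsf R_\sigma$ for $u\ge C$, and on $[C,\infty)$ the forward weight exceeds the reverse weight by the factor $u$ anyway. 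So the split cannot give a constant depending only on $C$. Your fallback with $\tau=(C\sigma-\rho)/(C-1)$ leads, through joint convexity, to $D(\rho\Vert\tau)$. That quantity is not controlled by $D(\sigma\Vert\rho)$, and it is infinite when $\tau$ is singular.

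\textbf{Missing idea.} You need a representation in which all three entropies are integrals of one scalar function of a single operator family. Frenkel's integral formula provides one. With $N(s)=\Tr[((1-s)\rho+s\sigma)_-]$,
\[
D(\rho\Vert\sigma)=\int_{\mathbb R}\frac{N(s)\dd s}{|s|(s-1)^2},\qquad
D(\sigma\Vert\rho)=\int_{\mathbb R}\frac{N(s)\dd s}{s^2|s-1|},
\]
and $D(\rho_\theta\Vert\sigma)$ is the first integral with $N(s)$ replaced by $N(1-\theta+\theta s)$. Since $N$ is convex, nonnegative and vanishes on $[0,1]$, on each half-line it is a nonnegative superposition of ramps. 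Each ramp is a positive multiple of the $N$-function of a scalar pair $(t,1)$. By linearity, the two inequalities reduce to $f(1-\theta+\theta t)\ge c(\theta)f(t)$ and to $f(t)\le\frac{f(C)}{h(C)}h(t)$, where $h(t)=t-1-\log t$. Moreover, $\rho\le C\sigma$ forces $N=0$ on $[1,C/(C-1)]$, so only $t\in[0,C]$ occurs. Since $f/h$ is increasing, this gives exactly the constant $\frac{C\log C-C+1}{C-1-\log C}$ quoted in the paper.
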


\begin{proof}
The \cite[Proposition~2.3]{GaoZhao2026} gives
\begin{align*}
D(\rho\Vert\sigma) &
 \le\frac{C\log C-C+1}{C-1-\log C}D(\sigma\Vert\rho)
 \le(1+\log C)D(\sigma\Vert\rho),\\
D((1-\theta)\sigma+\theta\rho\Vert\sigma) &
 \ge\left(\theta+(1-\theta)\log(1-\theta)\right)D(\rho\Vert\sigma)
\end{align*}
where in the first estimate we used the fact
$2(e^x-1-x)\ge x^2$ for $x=\log C>0$.
\end{proof}
\begin{lemma}
\label{lem:ep-convexity}
Let $T$ be unital and completely positive, and let
$T_*\sigma=\sigma>0$. Then
\begin{align*}
\Delta_{T,\sigma}(\rho)
 &=D(\rho\Vert\sigma)-D(T_*\rho\Vert\sigma)
\end{align*}
is convex on the entire state space. In particular,
\begin{align*}
\Delta_{T,\sigma}((1-\theta)\sigma+\theta\rho)
 &\le\theta\Delta_{T,\sigma}(\rho),\qquad 0\le\theta\le1.
\end{align*}
\end{lemma}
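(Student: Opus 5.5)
The first claim is the convexity of $\Delta_{T,\sigma}$. I would expand it as
\[
\Delta_{T,\sigma}(\rho)=\Tr\rho\log\rho-\Tr T_*\rho\log T_*\rho-\Tr\rho\log\sigma+\Tr (T_*\rho)\log\sigma .
\]
Because $T$ is unital with $T_*\sigma=\sigma$, the last two terms combine to $-\Tr\rho\bigl(\log\sigma-T(\log\sigma)\bigr)$. This is linear in $\rho$, so convexity of $\Delta_{T,\sigma}$ is equivalent to convexity of
\[
\rho\mapsto S(T_*\rho)-S(\rho),
\]
the entropy gain of the channel $T_*$. One could try to show this directly. A cleaner route is the Lindblad/Stinespring identity: write $T_*\rho=\Tr_{\mathcal E}\bigl(V\rho V^*\bigr)$ with an isometry $V$. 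Then $S(T_*\rho)-S(\rho)=S(A)_\tau-S(A\mathcal E)_\tau=-S(\mathcal E\mid A)_\tau$ for $\tau=V\rho V^*$. This is minus a conditional entropy, and conditional entropy is concave in the joint state by strong subadditivity. Since $\rho\mapsto V\rho V^*$ is affine, the map $\rho\mapsto -S(\mathcal E\mid A)_{V\rho V^*}$ is convex.

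An alternative, which avoids Stinespring, uses the joint convexity form directly. For $\rho=\sum_i p_i\rho_i$, adjoin a classical register and set $\hat\rho=\sum_i p_i\rho_i\otimes|i\rangle\langle i|$ and $\hat\sigma=\sigma\otimes\sum_i p_i|i\rangle\langle i|$. Then
\[
\sum_i p_i\Delta(\rho_i)-\Delta(\rho)=D(\hat\rho\Vert\rho\otimes p)-D\bigl((T_*\otimes\id)\hat\rho\,\Vert\,T_*\rho\otimes p\bigr),
\]
which is a mutual-information difference. This is $\ge0$ by data processing under $T_*\otimes\id$, since $(T_*\otimes\id)(\rho\otimes p)=T_*\rho\otimes p$. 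I would present this version: the extra terms involving $\log\sigma$ cancel by linearity, and the identity reduces to $\sum_ip_iD(\rho_i\Vert\sigma)-D(\rho\Vert\sigma)=\sum_ip_iD(\rho_i\Vert\rho)$, the Holevo identity, applied on both input and output. The main thing to check carefully is the bookkeeping in that identity, including the domain issue when $\rho_i$ are not faithful. This is harmless because $\sigma>0$ makes every term finite and continuous.

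The "in particular" statement then follows from convexity applied along the segment from $\sigma$ to $\rho$, using $\Delta_{T,\sigma}(\sigma)=0$ since $T_*\sigma=\sigma$. Explicitly,
\[
\Delta_{T,\sigma}\bigl((1-\theta)\sigma+\theta\rho\bigr)\le(1-\theta)\Delta_{T,\sigma}(\sigma)+\theta\Delta_{T,\sigma}(\rho)=\theta\Delta_{T,\sigma}(\rho).
\]
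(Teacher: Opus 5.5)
Your argument is correct. The version you choose to present takes a different route from the paper, while your first sketch is exactly the paper's proof. The paper writes $\Delta_{T,\sigma}(\rho)=S(T_*\rho)-S(\rho)+\Tr\rho[T(\log\sigma)-\log\sigma]$, identifies the entropy gain with $-S(F|\cH)$ for a Stinespring dilation, and invokes convexity of negative conditional entropy, i.e.\ strong subadditivity.

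Your presented version applies the compensation (Holevo) identity on the input and output side. This gives the exact Jensen gap
\[
\sum_i p_i\Delta_{T,\sigma}(\rho_i)-\Delta_{T,\sigma}(\rho)=\sum_i p_i\bigl[D(\rho_i\Vert\rho)-D(T_*\rho_i\Vert T_*\rho)\bigr]\ge0 .
\]
The classical register is superfluous: each summand is already nonnegative by data processing for $T_*$ alone.

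Depth and scope of the two routes:
\begin{itemize}
\item Both proofs are of comparable depth, since data processing and strong subadditivity are both consequences of Lieb concavity.
\item Yours avoids the dilation and produces an explicit formula for the convexity defect.
\item Yours works for any map obeying data processing, e.g.\ positive trace-preserving maps by M\"uller-Hermes--Reeb. The paper's argument needs a Stinespring dilation, hence complete positivity.
\end{itemize}

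Two small corrections to your bookkeeping:
\begin{itemize}
\item Combining the $\log\sigma$ terms uses only trace duality, $\Tr(T_*\rho)\log\sigma=\Tr\rho\,T(\log\sigma)$. It does not use unitality or invariance. The hypothesis $T_*\sigma=\sigma$ enters only through $\Delta_{T,\sigma}(\sigma)=0$ in the final step.
\item Finiteness of $D(\rho_i\Vert\rho)$ and $D(T_*\rho_i\Vert T_*\rho)$ does not come from $\sigma>0$. It comes from $p_i\rho_i\le\rho$ and $p_iT_*\rho_i\le T_*\rho$, which give both quantities at most $-\log p_i$.
\end{itemize}
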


\begin{proof}
Write the von Neumann entropy $S(\rho)=-\Tr\rho\log\rho$. Then
\begin{align}
\Delta_{T,\sigma}(\rho)
 &=S(T_*\rho)-S(\rho)
   +\Tr\rho\bigl[T(\log\sigma)-\log\sigma\bigr].
\label{eq:deficit-convexity}
\end{align}
By Stinespring's representation
\cite[Theorem~1]{Stinespring1955}, there is an isometry
$U:\cH\to\cH\otimes\cH_F$ such that
\begin{align*}
T_*(\rho)&=\Tr_{\cH_F}U\rho U^*=\omega_{\cH}\ ,\qquad
\omega=U\rho U^*,\\
S(T_*\rho)-S(\rho)&=S(\omega_{\cH})-S(\omega)=:-S(F|{\cH})_\omega.
\end{align*}
Since the last term is linear \eqref{eq:deficit-convexity} and by the convexity of negative
conditional entropy, $\rho \mapsto \Delta_{T,\sigma}(\rho)$ is
convex. The final inequality follows from
$\Delta_{T,\sigma}(\sigma)=0$.
\end{proof}
\section{Entropy contraction for quantum channels}
\label{sec:channels}

Throughout this section, let $E$ be a faithful conditional expectation
$E:\cB(\cH)\to\cN$ and $\Phi:\cB(\cH)\to\cB(\cH)$ be a completely positive unital map satisfying the assumption \eqref{eq:channel-compatibility},
\begin{align*}
\Phi E=E\Phi=E.
\end{align*}
The first lemma concerns the faithful conditional expectation $E$ alone.

\subsection{Conditional variance and reverse relative entropy}

Let $\rho$ be a faithful state, and set $\sigma=E_*\rho$.
For $Z_u=(\mathsf L_\rho+u\mathsf R_\sigma)^{-1}(\rho-\sigma)$
as in \eqref{eq:resolvent-definitions}, we introduce the following auxiliary quantity:
\begin{align}
\mathcal S_E(\rho)&:=\int_0^\infty \frac{u}{(1+u)^2}
  \bigl(\Var_{E,\sigma}(Z_u^*)+u\Var_{E,\sigma}(Z_u)\bigr)\dd u.
\label{eq:conditional-variance-resolvent}
\end{align}Thus $\mathcal S_E(\rho)$ denotes the integral of conditional variances
contributed by the centered operators $Z_u-E(Z_u)$.
Since $E_*\sigma=\sigma$, the orthogonality
\eqref{eq:orthogonal-decompositions} implies
\begin{align}
\Tr(\sigma X^*X)\ge \Var_{E,\sigma}(X)\ ,\qquad
\Tr(\sigma XX^*)\ge \Var_{E,\sigma}(X^*).
\end{align}
The representation of reverse relative entropy
\eqref{eq:reverse-energy} gives
\begin{align*}
0&\le\mathcal S_E(\rho)\le D(\sigma\Vert\rho).
\end{align*}
The following lemma gives a lower bound for
$\mathcal S_E(\rho)$ under an order assumption on $\rho$.

\begin{lemma}
\label{lem:centering}
Let $E:\mathcal B(\mathcal H)\to\mathcal N$ be a faithful
conditional expectation, and let $\rho$ be a faithful state.
If $\rho\ge aE_*\rho$ for some $0<a<1$, then
\[
\mathcal S_E(\rho)\ge aD(E_*\rho\Vert\rho).
\]
\end{lemma}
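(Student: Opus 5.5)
The plan is to compare the two energy integrands in \eqref{eq:reverse-energy} and \eqref{eq:conditional-variance-resolvent} at each fixed $u$, and only afterwards integrate against $u(1+u)^{-2}\dd u$. Write $\sigma=E_*\rho$ and split $Z_u=W_u+N_u$ with $N_u=E(Z_u)\in\cN$ and $E(W_u)=0$. By \eqref{eq:orthogonal-decompositions} (using $E(W_u^*)=E(W_u)^*=0$), the reverse energy splits as
\begin{align*}
\Tr(\sigma Z_uZ_u^*)+u\Tr(\sigma Z_u^*Z_u)
 =\bigl[\Var_{E,\sigma}(Z_u^*)+u\Var_{E,\sigma}(Z_u)\bigr]
 +\bigl[\Tr(\sigma N_uN_u^*)+u\Tr(\sigma N_u^*N_u)\bigr].
\end{align*}
By \eqref{eq:reverse-energy} it therefore suffices to prove the pointwise bound
\begin{align*}
\Tr(\sigma N_uN_u^*)+u\Tr(\sigma N_u^*N_u)\le\frac{1-a}{a}\bigl[\Var_{E,\sigma}(Z_u^*)+u\Var_{E,\sigma}(Z_u)\bigr],
\end{align*}
or at least its integrated version. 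Integrating this bound gives $D(\sigma\Vert\rho)-\mathcal S_E(\rho)\le\frac{1-a}{a}\mathcal S_E(\rho)$, which is exactly $\mathcal S_E(\rho)\ge aD(\sigma\Vert\rho)$.

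To get information on $N_u$, I will test the resolvent equation $\rho Z_u+uZ_u\sigma=\rho-\sigma$ against elements $A\in\cN$. Since $E_*\rho=\sigma$, we have $\Tr(A^*(\rho-\sigma))=0$, so
\[
\Tr(A^*\rho Z_u)+u\Tr(\sigma A^*Z_u)=0\qquad\text{for all }A\in\cN .
\]
In other words, $Z_u$ is $\mathsf M_u$-orthogonal to $\cN$. This is also the stationarity condition in the variational formula \eqref{eq:variational} with respect to perturbations in $\cN$.

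Next I take $A=N_u$ and use GNS orthogonality. We have $\Tr(\sigma N_u^*W_u)=\ip{N_u}{W_u}_{2,\sigma}=0$ and $\Tr(\sigma W_uN_u^*)=\overline{\ip{W_u^*}{N_u^*}_{2,\sigma}}=0$. These give the identity
\begin{align*}
\Tr(\rho N_uN_u^*)+u\Tr(\sigma N_u^*N_u)=-\Tr\bigl((\rho-\sigma)W_uN_u^*\bigr),
\end{align*}
where the term $\sigma W_uN_u^*$ was inserted for free. Now the hypothesis $\rho\ge a\sigma$ enters twice:
\begin{itemize}
\item on the left it gives $\Tr(\rho N_uN_u^*)\ge a\Tr(\sigma N_uN_u^*)$;
\item on the right it gives the operator bound $0\le\rho-a\sigma$, while $\rho-\sigma=(\rho-a\sigma)-(1-a)\sigma$ and the $\sigma$-piece pairs to zero by orthogonality.
\end{itemize}
So the right side equals $-\Tr((\rho-a\sigma)W_uN_u^*)$. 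By Cauchy--Schwarz for the positive weight $\rho-a\sigma$ it is bounded by
\[
\Tr\bigl((\rho-a\sigma)N_uN_u^*\bigr)^{1/2}\,\Tr\bigl((\rho-a\sigma)W_uW_u^*\bigr)^{1/2}.
\]

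The main obstacle is closing this estimate with the sharp constant $(1-a)/a$. The weight $\rho-a\sigma$ is not controlled by $\sigma$ from above with a useful constant, and the $W_u$-factor carries a $\rho$-weight rather than the $\sigma$-weight appearing in $\Var_{E,\sigma}$. My first attempt would be to stay at the quadratic-form level rather than use Cauchy--Schwarz. Here $N_u$ minimizes $N\mapsto\ip{W_u+N}{\mathsf M_u(W_u+N)}_{\HS}$ over $\cN$, so one can compare $\mathsf M_u=\mathsf L_\rho+u\mathsf R_\sigma$ with $a\mathsf L_\sigma+u\mathsf R_\sigma+\mathsf L_{\rho-a\sigma}$. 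This exploits that $\mathsf L_\sigma+u\mathsf R_\sigma$ leaves $\cN$ and its GNS complement orthogonal, while $\mathsf L_{\rho-a\sigma}$ is positive. The resulting Schur-complement inequality should bound the $\cN$-energy by a $(1-a)/a$ multiple of the complementary energy.

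If the pointwise-in-$u$ bound fails, I would fall back on exploiting the integral in $u$. Using the symmetric roles of $Z_u$ and $Z_u^*$ under $u\leftrightarrow u^{-1}$ in the measure $u(1+u)^{-2}\dd u$, I would pair the $\Tr(\sigma N_uN_u^*)$ terms with the $u\Tr(\sigma N_u^*N_u)$ terms before integrating. Finally, the case of general faithful $\rho$ needs no approximation, since all quantities are finite for faithful states.
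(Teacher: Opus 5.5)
Your reduction to $\mathcal M_E(\rho)\le\frac{1-a}{a}\mathcal S_E(\rho)$ is correct, but the pointwise-in-$u$ inequality you plan to prove is false. It fails already for commuting states with $\cN=\mathbb C\Id$.

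\emph{Counterexample.} Take $\cH=\mathbb C^2$, $E(X)=\Tr(\sigma X)\Id$, $\sigma=\mathrm{diag}(18/19,1/19)$, $\rho=\mathrm{diag}(9/19,10/19)$ and $a=1/2$. Then $E_*\rho=\sigma$ and $\rho\ge a\sigma$. Everything is diagonal: $Z_u=\mathrm{diag}(z_1,z_2)$ with $z_i=(r_i-1)/(r_i+u)$ and $r=(1/2,10)$, and $N_u=n_u\Id$ with $n_u=\sum_i\sigma_iz_i$. At $u=1$ you get $Z_1=\mathrm{diag}(-1/3,9/11)$ and $n_1=-3/11$. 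So the left side of your pointwise bound is $18/121$, while $\Var_{E,\sigma}(Z_1^*)+\Var_{E,\sigma}(Z_1)=16/121$ and $(1-a)/a=1$. In this example the ratio of the two energies is exactly $4.5/(1+u)^2$, so the bound fails on the whole interval $0<u<\sqrt{4.5}-1$.

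Consequently no Schur-complement or Cauchy--Schwarz estimate can close this step, and ``its integrated version'' is just the lemma restated. Your fallback does not supply an identity either. The adjoint of $\rho Z_u+uZ_u\sigma=\rho-\sigma$ is $\sigma Z_u^*+u^{-1}Z_u^*\rho=u^{-1}(\rho-\sigma)$, which is the resolvent equation for the swapped pair $(\sigma,\rho)$. So $Z_u^*$ is not tied to $Z_{1/u}$ of the same pair. Moreover, in the example $Z_u=Z_u^*$, so re-pairing the two terms changes nothing.

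The missing idea is that $D(\sigma\Vert\rho)$ has two representations, \eqref{eq:reverse-resolvent} and \eqref{eq:reverse-energy}, which agree only after integration, and the paper's proof uses both. The $\cN$-part of the energy integrand is compared not with the centered part, but with $(1-a)$ times the integrand $\mathcal Q_u/(1+u)^2$ of \eqref{eq:reverse-resolvent}. The link is your own orthogonality relation with $A=\Id$, which gives $\mathcal Q_u=-(1+u)\Tr(\sigma E(Z_u))$.

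The paper writes $E(Z_u)=-(1+u)f_u$ with $f_u=K_u(\Id)$ and $K_u=E(\Delta_{\rho,\sigma}+u)^{-1}|_{\cN}-(\Delta_{\cN}+u)^{-1}$. This $K_u$ satisfies $0\le K_u\le F_u=(a\Delta_{\cN}+u)^{-1}-(\Delta_{\cN}+u)^{-1}$. Evaluating $K_uF_u^{-1}K_u\le K_u$ at $\Id$ gives
\[
(1-a)\langle \Id,f_u\rangle_\sigma\ge a\langle f_u,\Delta_{\cN}f_u\rangle_\sigma+(1+a)u\|f_u\|_\sigma^2+u^2\langle f_u,\Delta_{\cN}^{-1}f_u\rangle_\sigma .
\]
In your example this reads $\mathcal Q_u\le(1-a)(1+u)/(a+u)$. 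That already gives $u\mathcal Q_u^2/(1+u)^3\le(1-a)\mathcal Q_u/(1+u)^2$ pointwise, i.e.\ $\mathcal M_E(\rho)\le(1-a)D(\sigma\Vert\rho)$.

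For non-tracial $\sigma|_{\cN}$, the right side above still does not dominate $u\langle f_u,\Delta_{\cN}f_u\rangle_\sigma+u^2\|f_u\|_\sigma^2$ pointwise. The paper closes this case with the integrated modular symmetry $J_{p,q}=J_{p,2-p-q}$. That symmetry comes from the rescaling $u\mapsto u/s$ on the $s$-eigenspaces of $\Delta_{\cN}$, not from exchanging $Z_u$ and $Z_u^*$.
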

\begin{proof}
\textbf{Step 1. Integral representation.}\par\noindent
Write $\sigma=E_*\rho$ and set
\[
\mathcal M_E(\rho)
:=D(\sigma\Vert\rho)-\mathcal S_E(\rho).
\]
It suffices to prove
$\mathcal M_E(\rho)\le(1-a)D(\sigma\Vert\rho)$.
We work with the GNS inner product
\[
\langle X,Y\rangle_\sigma
=
\operatorname{Tr}(\sigma X^*Y),
\]
and define the modular operators
\[
\Delta_{\rho,\sigma}=L_\rho R_\sigma^{-1},
\qquad
\Delta_{\sigma}=L_\sigma R_\sigma^{-1}.
\]
These operators are positive definite and self-adjoint for the $L_2(\sigma)$
inner product. Indeed,
\[
\langle X,\Delta_{\rho,\sigma}Y\rangle_\sigma
=
\operatorname{Tr}(X^*\rho Y),
\qquad
\langle X,\Delta_{\sigma}Y\rangle_\sigma
=
\operatorname{Tr}(X^*\sigma Y).
\]
The assumption $\rho\ge a\sigma$ therefore gives
\[
\Delta_{\rho,\sigma}\ge a\Delta_{\sigma}.
\]
Since $E_*\sigma=\sigma$, the conditional expectation $E$ is also the
orthogonal projection onto $\mathcal N\subset L_2(\sigma)$.
For $A\in\mathcal N$ and $X\in\mathcal B(\mathcal H)$,
\[
\begin{aligned}
\langle A,\Delta_\sigma X\rangle_\sigma
&=\operatorname{Tr}(\sigma XA^*)
=\operatorname{Tr}(\sigma E(XA^*))\\
&=\operatorname{Tr}(\sigma E(X)A^*)
=\langle A,\Delta_\sigma E(X)\rangle_\sigma.
\end{aligned}
\]
Thus $E\Delta_\sigma=E\Delta_\sigma E$; taking adjoints gives
$\Delta_\sigma E=E\Delta_\sigma E$.
Write
\[
\Delta_\cN=\Delta_\sigma|_{\mathcal N}.
\]
For $A,B\in\mathcal N$,
\[
\begin{aligned}
\langle A,\Delta_{\rho,\sigma}B\rangle_\sigma
&=\operatorname{Tr}(\rho BA^*)
=\operatorname{Tr}((E_*\rho)BA^*)\\
&=\langle A,\Delta_\sigma B\rangle_\sigma.
\end{aligned}
\]
Thus,
\[
E\Delta_{\rho,\sigma}|_{\mathcal N}=\Delta_\cN.
\]
See also \cite[Lemma~4.17]{GaoWilde2021}.

For $u>0$, define operators on the Hilbert subspace
$\mathcal N\subset L_2(\sigma)$ by
\[
K_u
=
E(\Delta_{\rho,\sigma}+u\,\mathrm{id})^{-1}|_{\mathcal N}
-
(\Delta_\cN+u\,\mathrm{id})^{-1},
\qquad
f_u=K_u(I).
\]
Operator Jensen gives $K_u\ge0$.
Since $E$ commutes with $\Delta_\sigma$, the inequality
$\Delta_{\rho,\sigma}\ge a\Delta_\sigma$ and the
order-reversing property of inversion yield
\[
0\le K_u\le F_u,
\qquad
F_u
=
(a\Delta_\cN+u\,\mathrm{id})^{-1}
-
(\Delta_\cN+u\,\mathrm{id})^{-1}.
\]
In particular, $K_u$ is self-adjoint and $F_u$ is positive definite
with respect to the Hilbert space $L_2(\cN,\sigma)$.

Consider the equation
\[
\rho-\sigma
=\rho Z_u+uZ_u\sigma
=R_{\sigma}(\Delta_{\rho,\sigma}+u\,\mathrm{id})(Z_u).
\]
The solution can be expressed as
\[
Z_u
=
I-(1+u)(\Delta_{\rho,\sigma}+u\,\mathrm{id})^{-1}(I).
\]
Since $\Delta_\sigma(I)=I$, we obtain
\begin{align*}
E(Z_u)
&=I-(1+u)\left[
K_u(I)+(\Delta_\cN+u\,\mathrm{id})^{-1}(I)
  \right]\\
&=-(1+u)f_u.
\end{align*}
Taking the trace in the defining equation for $Z_u$ gives
\begin{align*}
\mathcal Q_u
&=\operatorname{Tr}((\rho-\sigma)Z_u)
=-(1+u)\operatorname{Tr}(\sigma Z_u)\\
&=-(1+u)\operatorname{Tr}(\sigma E(Z_u))
=(1+u)^2\operatorname{Tr}(\sigma f_u).
\end{align*}
The resolvent representation of reverse relative entropy therefore
implies
\[
D(\sigma\Vert\rho)
=\int_0^\infty\frac{\mathcal Q_u}{(1+u)^2}\,du
=\int_0^\infty\langle I,f_u\rangle_\sigma\,du.
\]
On the other hand, the orthogonal decomposition associated with $E$
and the energy representation of reverse relative entropy give
\[
\begin{aligned}
\mathcal M_E(\rho)
&=\int_0^\infty\frac{u}{(1+u)^2}
\left(
\operatorname{Tr}\!\left[\sigma E(Z_u)E(Z_u)^*\right]
+u\operatorname{Tr}\!\left[\sigma E(Z_u)^*E(Z_u)\right]
\right)\,du\\
&=\int_0^\infty
\left(
u\langle f_u,\Delta_\cN f_u\rangle_\sigma
+u^2\|f_u\|_\sigma^2
\right)\,du.
\end{aligned}
\]
Here we used
$
\langle X,\Delta_\cN X\rangle_\sigma
=
\operatorname{Tr}(\sigma XX^*)
=
\|X^*\|_\sigma^2,
\ X\in\mathcal N.
$

\noindent\textbf{Step 2. Integrated modular symmetry.}\par\noindent
The following integrated symmetry will allow us to compare
the quadratic terms in $\mathcal M_E(\rho)$ with those
obtained from $K_u\le F_u$.
For $p\in\{0,1,2\}$ and $q\in\mathbb R$, define
\[
J_{p,q}
=
\int_0^\infty
u^p\langle f_u,\Delta_\cN^qf_u\rangle_\sigma\,du.
\]
We claim that
\[
J_{p,q}=J_{p,\,2-p-q}.
\]
All integrals below converge. Indeed, $K_u$ is bounded near zero,
and the resolvent expansions, together with
$E\Delta_{\rho,\sigma}|_{\mathcal N}=\Delta_\cN$, imply
\[
K_u=O(u^{-3})\quad\text{as }u\to\infty.
\]

For $s>0$, let $\Pi_s$ be the spectral projection of
$\Delta_\cN=\Delta_\sigma|_{\mathcal N}$ corresponding to
the eigenvalue $s$. If $A\in\mathcal N$ satisfies
$\Delta_\cN A=sA$, then
\[
\sigma A\sigma^{-1}=sA,
\qquad
\Delta_{\rho,\sigma}R_A=sR_A\Delta_{\rho,\sigma},
\]
where $R_A(X)=XA$ is the right multiplication. Thus
\[
(\Delta_{\rho,\sigma}+u\,\mathrm{id})^{-1}R_A
=
s^{-1}R_A
(\Delta_{\rho,\sigma}+(u/s)\,\mathrm{id})^{-1}.
\]
Using the right $\mathcal N$-module property of $E$, we obtain
\[
E(\Delta_{\rho,\sigma}+u\,\mathrm{id})^{-1}(A)
=s^{-1}\left(f_{u/s}+\frac{1}{1+u/s}I\right)A.
\]
Subtracting
$(\Delta_\cN+u\,\mathrm{id})^{-1}(A)=(s+u)^{-1}A$
gives
\[
K_u(A)=s^{-1}f_{u/s}A.
\]
Self-adjointness of $K_u$ now gives
\[
\begin{aligned}
\langle A,f_u\rangle_\sigma
&=\langle K_uA,I\rangle_\sigma
=s^{-1}\langle f_{u/s}A,I\rangle_\sigma\\
&=s^{-1}\langle A,f_{u/s}^*\rangle_\sigma.
\end{aligned}
\]
Since taking adjoints sends the $s$-eigenspace of
$\Delta_\cN=L_\sigma R_\sigma^{-1}$ onto its
$s^{-1}$-eigenspace, it follows that
\[
\Pi_sf_u
=
s^{-1}\bigl(\Pi_{1/s}f_{u/s}\bigr)^*.
\]
Moreover, if $\Delta_\cN X=sX$, then
\[
\|X^*\|_\sigma^2=s\|X\|_\sigma^2.
\]
Therefore, writing
\[
n_s(u)=\|\Pi_sf_u\|_\sigma^2,
\]
we have
\[
n_s(u)=s^{-3}n_{1/s}(u/s).
\]
Changing variables $u=sv$ in each spectral summand gives
\[
\begin{aligned}
J_{p,q}
&=\sum_s s^q\int_0^\infty u^p n_s(u)\,du\\
&=\sum_s s^{p+q-2}\int_0^\infty v^p n_{1/s}(v)\,dv\\
&=J_{p,\,2-p-q},
\end{aligned}
\]
which proves the claim.

\noindent\textbf{Step 3. Closing by operator comparison.}\par\noindent
It follows from the above symmetry
\[
J_{1,1}=J_{1,0},
\qquad
J_{2,-1}=J_{2,1}.
\]
Since $\Delta_\cN+\Delta_\cN^{-1}\ge2\,\mathrm{id}$, we obtain
\[
\begin{aligned}
J_{2,-1}
&=\frac12\bigl(J_{2,-1}+J_{2,1}\bigr)\\
&=\frac12\int_0^\infty
u^2\langle f_u,(\Delta_\cN+\Delta_\cN^{-1})f_u\rangle_\sigma\,du\\
&\ge J_{2,0}.
\end{aligned}
\]
Since
\[
F_u
=
(1-a)\Delta_\cN
(a\Delta_\cN+u\,\mathrm{id})^{-1}
(\Delta_\cN+u\,\mathrm{id})^{-1},
\]
we have
\[
(1-a)F_u^{-1}
=
a\Delta_\cN+(1+a)u\,\mathrm{id}+u^2\Delta_\cN^{-1}.
\]
Since $0\le K_u\le F_u$, squaring the positive contraction
$F_u^{-1/2}K_uF_u^{-1/2}$ gives
\[
K_uF_u^{-1}K_u\le K_u.
\]
Evaluating this inequality at $I$ and using $f_u=K_u(I)$ gives
\[
\begin{aligned}
(1-a)\langle I,f_u\rangle_\sigma=(1-a)\langle I,K_u(I)\rangle_\sigma
&\ge(1-a)\langle I,K_uF_u^{-1}K_u(I)\rangle_\sigma\\
&=(1-a)\langle f_u,F_u^{-1}f_u\rangle_\sigma\\
&=a\langle f_u,\Delta_\cN f_u\rangle_\sigma
 +(1+a)u\|f_u\|_\sigma^2
 +u^2\langle f_u,\Delta_\cN^{-1}f_u\rangle_\sigma.
\end{aligned}
\]
After integration, the modular symmetry yields
\[
\begin{aligned}
(1-a)D(\sigma\Vert\rho)
&\ge aJ_{0,1}+(1+a)J_{1,0}+J_{2,-1}\\
&\ge J_{1,1}+J_{2,0}\\
&=\mathcal M_E(\rho).
\end{aligned}
\]
Therefore,
\begin{align*}
\mathcal S_E(\rho)
=&
D(\sigma\Vert\rho)-\mathcal M_E(\rho)
\ge aD(\sigma\Vert\rho). \qedhere
\end{align*}
\end{proof}

\subsection{Relative entropy loss under a channel}
Let $\Phi$ be unital and completely positive, and let
$\rho,\sigma,\Phi_*\rho,\Phi_*\sigma$ be faithful states.
Recall the notations $Z_u$ and $\mathcal Q_u$ from \eqref{eq:resolvent-definitions}
\begin{align*}
Z_u&=(\mathsf L_\rho +u\mathsf R_\sigma)^{-1}(\rho-\sigma),\\
\mathcal Q_u&=\ip{\rho-\sigma}{(\mathsf L_\rho +u\mathsf R_\sigma)^{-1}(\rho-\sigma)}_{\HS}=\Tr(\rho Z_uZ_u^*)+u\Tr(\sigma Z_u^*Z_u).
\end{align*}
for the input pair $\rho,\sigma$ and denote $Z'_u,\mathcal Q'_u$ for the corresponding terms for the
output pair $(\Phi_*\rho,\Phi_*\sigma)$. Recall the inverse-form
monotonicity of Lesniewski and Ruskai
\cite[Theorem~2.15 and Eqs.~(58)--(60)]{LesniewskiRuskai1999} that
\begin{align*}
\ip{\Phi_*(A)}{(\mathsf L_{\Phi_*\rho}+u\mathsf R_{\Phi_*\sigma})^{-1}\Phi_*(A)}_{\HS}
 \le \ip{A}{(\mathsf L_\rho+u\mathsf R_\sigma)^{-1}A}_{\HS}, \ \ \forall \ A\in B(\cH)
 \qquad u>0,
\end{align*}
Taking $A=\rho-\sigma$ gives $\mathcal Q'_u\le\mathcal Q_u$. The following identity retains this
monotonicity in a completed-square form. Define the mixed GNS-norm
\begin{align*}
\norm{Y}_{\rho,\sigma;u}^2
 &=\Tr(\rho YY^*)+u\Tr(\sigma Y^*Y),
\end{align*} and write \[\mathcal B_\Phi(X)=\Phi(X^*X)-\Phi(X)^*\Phi(X)\ge0,\] Kadison-Schwarz inequality. It is (essentially) shown in  \cite[Section~4.3, Eqs.~(34)--(36)]{Ruskai2007} that
\begin{align}
\mathcal{Q}_u-\mathcal{Q}'_u
 =\norm{Z_u-\Phi(Z'_u)}_{\rho,\sigma;u}^2
   +\Tr\bigl[\rho\mathcal{B}_\Phi((Z'_u)^*)\bigr]
   +u\Tr\bigl[\sigma\mathcal{B}_\Phi(Z'_u)\bigr].
\label{eq:finite-resolvent-defect}
\end{align}
Indeed, set $A=\rho-\sigma$, and
\begin{align*}
F_u(Y)&=2\Re\ip{A}{Y}_{\HS}-\norm{Y}_{\rho,\sigma;u}^2,
\end{align*}
with $F'_u$ defined for the output pair $\Phi_*\rho,\Phi_*\sigma$. The variational formula
\eqref{eq:variational} gives
\begin{align*}
\mathcal Q_u-F_u(\Phi Z'_u)
 &=\norm{Z_u-\Phi Z'_u}_{\rho,\sigma;u}^2,
 \qquad F'_u(Z'_u)=\mathcal Q'_u.
\end{align*}Therefore
\begin{align}
\mathcal Q_u-\mathcal Q'_u=\norm{Z_u-\Phi Z'_u}_{\rho,\sigma;u}^2+F_u(\Phi Z'_u)-F'_u(Z'_u)\label{eq:resolvent-defect-splitting}
\end{align}
The linear terms cancel by trace duality
\[ \ip{A}{\Phi Z'_u}_{\HS}=\ip{\Phi_*A}{Z'_u}_{\HS}\ .\]
The remaining terms are
\begin{align*}
&F_u(\Phi Z'_u)-F'_u(Z'_u)\\
 =&\Tr[(\Phi_*\rho)Z'_u(Z'_u)^*]+u\Tr[(\Phi_*\sigma)(Z'_u)^*Z'_u]\\
 &\quad-\Tr[\rho\Phi(Z'_u)\Phi(Z'_u)^*]-u\Tr[\sigma\Phi(Z'_u)^*\Phi(Z'_u)]\\
 =&\Tr\bigl[\rho\bigl(\Phi(Z'_u(Z'_u)^*)
                         -\Phi(Z'_u)\Phi(Z'_u)^*\bigr)\bigr]+u\Tr\bigl[\sigma\bigl(\Phi((Z'_u)^*Z'_u)
                         -\Phi(Z'_u)^*\Phi(Z'_u)\bigr)\bigr]\\
 =&\Tr[\rho\mathcal B_\Phi((Z'_u)^*)]
       +u\Tr[\sigma\mathcal B_\Phi(Z'_u)].
\end{align*}
Putting this back to \eqref{eq:resolvent-defect-splitting} proves \eqref{eq:finite-resolvent-defect}.

\begin{lemma}
\label{lem:channel-defect}
Let $\Phi$ be unital and completely positive and $E$ be a faithful conditional expectation
satisfying $\Phi E=E\Phi=E$.
If for a state $\rho$, $\rho\ge a E_*\rho$
for some $0<a<1$,  then
\begin{align}
D(\rho\Vert E_*\rho)-D(\Phi_*\rho\Vert E_*\rho)
 &\ge a^2\kappa D(E_*\rho \Vert \Phi_*\rho).
\end{align}
where $\kappa$ is the optimal constant from \eqref{eq:channel-variance} such that
\begin{align}
\Tr\bigl[\omega \mathcal B_\Phi(X)\bigr]
 &\ge\kappa\Var_{E, \omega}(X), \qquad  \forall \  X
\end{align}
for any faithful invariant $\omega=E_*\omega$.
\end{lemma}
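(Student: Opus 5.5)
The plan is to write the entropy loss through the forward resolvent representation \eqref{eq:forward-resolvent}, use the completed-square identity \eqref{eq:finite-resolvent-defect} to extract Kadison--Schwarz defects, and then apply the variance bound \eqref{eq:channel-variance} followed by Lemma~\ref{lem:centering} on the output state.

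\textbf{Setup and reduction to faithful states.} Set $\sigma=E_*\rho$. Dualizing $E\Phi=E$ gives $\Phi_*E_*=E_*$, so $\Phi_*\sigma=\sigma$. Dualizing $\Phi E=E$ gives $E_*\Phi_*=E_*$, so $E_*(\Phi_*\rho)=\sigma$. First assume $\rho$ is faithful. Then $\sigma$ is faithful, and positivity of $\Phi_*$ gives $\Phi_*\rho\ge a\Phi_*\sigma=a\sigma>0$, so $\Phi_*\rho$ is faithful as well. The general case follows by applying the faithful case to $\rho_\eps=(1-\eps)\rho+\eps\omega$ with $\omega=E_*\omega$ faithful. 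By linearity of $E_*$ one has $\rho_\eps\ge aE_*\rho_\eps$, and continuity (via the chain rule remark in Section~\ref{sec:preliminaries}) lets $\eps\to0$. For the term $D(E_*\rho\Vert\Phi_*\rho)$ I would only need lower semicontinuity, which suffices for a lower bound on the left side.

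\textbf{Applying the defect identity.} Apply \eqref{eq:forward-resolvent} to both pairs $(\rho,\sigma)$ and $(\Phi_*\rho,\Phi_*\sigma)=(\Phi_*\rho,\sigma)$. This gives
\[
D(\rho\Vert\sigma)-D(\Phi_*\rho\Vert\sigma)=\int_0^\infty\frac{u}{(1+u)^2}\bigl(\mathcal Q_u-\mathcal Q_u'\bigr)\dd u .
\]
Insert \eqref{eq:finite-resolvent-defect}, drop the nonnegative squared norm, and keep the terms $\Tr[\rho\,\mathcal B_\Phi((Z_u')^*)]+u\Tr[\sigma\,\mathcal B_\Phi(Z_u')]$. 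Since $\mathcal B_\Phi(\cdot)\ge0$ and $\rho\ge a\sigma$, the first term is at least $a\Tr[\sigma\,\mathcal B_\Phi((Z'_u)^*)]$. The second term is trivially at least $a$ times itself, because $a<1$. Now apply \eqref{eq:channel-variance} with the faithful invariant state $\omega=\sigma$, which is allowed since $E_*\sigma=\sigma$. The integrand is then bounded below by
\[
a\kappa\bigl(\Var_{E,\sigma}((Z_u')^*)+u\Var_{E,\sigma}(Z_u')\bigr).
\]

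\textbf{Closing with Lemma~\ref{lem:centering}.} Here $Z_u'=(\mathsf L_{\Phi_*\rho}+u\mathsf R_\sigma)^{-1}(\Phi_*\rho-\sigma)$ and $\sigma=E_*(\Phi_*\rho)$. The integral obtained above is therefore exactly $a\kappa\,\mathcal S_E(\Phi_*\rho)$ from \eqref{eq:conditional-variance-resolvent}. Since $\Phi_*\rho\ge aE_*(\Phi_*\rho)$, Lemma~\ref{lem:centering} yields $\mathcal S_E(\Phi_*\rho)\ge aD(\sigma\Vert\Phi_*\rho)$, and the claimed bound $a^2\kappa\,D(E_*\rho\Vert\Phi_*\rho)$ follows.

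\textbf{Main obstacle.} The main conceptual point is to recognize that the Kadison--Schwarz remainders in \eqref{eq:finite-resolvent-defect} are precisely weighted like the energy form \eqref{eq:reverse-energy} for the output pair, so that the centering lemma applies to $\Phi_*\rho$ rather than to $\rho$. Converting the $\rho$-weight into a $\sigma$-weight costs one factor of $a$, and the centering lemma costs the second.

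The technical obstacle is the approximation step. One must check that the order hypothesis survives the perturbation, which it does by linearity. One must also check that both sides behave well as $\eps\to0$: this uses continuity of $\rho\mapsto D(\rho\Vert E_*\rho)$ and lower semicontinuity of the reverse relative entropy.
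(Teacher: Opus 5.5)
Your proposal is correct and follows the paper's proof step for step. You drop the squared term in \eqref{eq:finite-resolvent-defect} and convert the $\rho$-weight into a $\sigma$-weight at the cost of one factor $a$. You then apply \eqref{eq:channel-variance} with $\omega=\sigma$ to reach $a\kappa\,\mathcal S_E(\Phi_*\rho)$, and close with Lemma~\ref{lem:centering} using $\Phi_*\rho\ge a\sigma$. Your reduction from general to faithful $\rho$ is valid: it uses $\rho_\eps\ge aE_*\rho_\eps$, continuity of $\tau\mapsto D(\tau\Vert E_*\tau)$ applied to both $\rho_\eps$ and $\Phi_*\rho_\eps$, and lower semicontinuity for $D(E_*\rho\Vert\Phi_*\rho)$. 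The paper leaves this step implicit, since its proof works with faithful states throughout.
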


\begin{proof}
Take $\sigma=E_*\rho$. We have
\begin{align*}
\Phi_*\sigma&=\sigma,\qquad E_*\Phi_*\rho=\sigma,\qquad
\Phi_*\rho\ge a\Phi_*\sigma=a\sigma.
\end{align*}
Drop the square, use positivity of $\mathcal B_\Phi$ and $\rho\ge a\sigma$,
and the definition of $\kappa$ above
\begin{align*}
\mathcal Q_u-\mathcal Q'_u
 &\ge\Tr[\rho\mathcal B_\Phi((Z'_u)^*)]
             +u\Tr[\sigma\mathcal B_\Phi(Z'_u)]\\
 &\ge a\Bigl(\Tr[\sigma\mathcal B_\Phi((Z'_u)^*)]
             +u\Tr[\sigma\mathcal B_\Phi(Z'_u)]\Bigr)\\
 &\ge a\kappa\Bigl(\Var_{E,\sigma}((Z'_u)^*)
                         +u\Var_{E,\sigma}(Z'_u)\Bigr).
\end{align*}
The forward representation \eqref{eq:forward-resolvent}, definition
\eqref{eq:conditional-variance-resolvent}, and Lemma~\ref{lem:centering}
applied to $\Phi_*\rho$ yield
\begin{align*}
D(\rho\Vert\sigma)-D(\Phi_*\rho\Vert\sigma)
 &=\int_0^\infty\frac{u}{(1+u)^2}(\mathcal Q_u-\mathcal Q'_u)\dd u\\
 &\ge a\kappa\mathcal S_E(\Phi_*\rho)
 \ge a^2\kappa D(\sigma\Vert\Phi_*\rho).\qedhere
\end{align*}
\end{proof}

\subsection{The logarithmic index bound}

\begin{proof}[Proof of Theorem~\ref{thm:channel}]
Let $C=C(E)$ and $\rho$ be a faithful state. Put
\begin{align*}
\sigma&=E_*\rho=E_*\Phi_*\rho,\qquad
\widehat\rho=\frac{\rho+\sigma}{2},\qquad
\widehat{\Phi_*\rho}=\Phi_*\widehat\rho=\frac{\Phi_*\rho+\sigma}{2},\\
\Delta(\rho)&=D(\rho\Vert\sigma)-D(\Phi_*\rho\Vert\sigma).
\end{align*}
We have the order relations
\begin{align*}
E_*\widehat\rho&=E_*\widehat{\Phi_*\rho} =\sigma,\qquad
\tfrac12\sigma\le\widehat{\Phi_*\rho}\le \frac{C+1}{2}\sigma.
\end{align*}
The upper bound follows from $\Phi_*\rho\le C\sigma$ by averaging with $\sigma$.
By the convexity from Lemma~\ref{lem:ep-convexity},
\[
\Delta(\widehat\rho)\le\frac{1}{2}(\Delta(\sigma)+\Delta(\rho))=\frac{1}{2}\Delta(\rho).
\]
Using Lemma~\ref{lem:channel-defect} with $a=1/2$ and then the two
comparisons of Lemma~\ref{lem:entropy-comparisons} for the output,
\begin{align}
D(\rho\Vert\sigma)
-D(\Phi_*\rho\Vert\sigma)=\Delta(\rho)
 &\ge 2\Delta(\widehat\rho)
 \ge\frac{\kappa}{2}D(\sigma\Vert\widehat{\Phi_*\rho})\nonumber\\
 &\ge\frac{\kappa}{2(1+\log C)}D(\widehat{\Phi_*\rho}\Vert\sigma)\nonumber\\
 &\ge\frac{(1-\log2)\kappa}{4(1+\log C)}D(\Phi_*\rho\Vert\sigma).
\end{align}
Rearranging the term gives,
\begin{align}
D(\rho\Vert\sigma)
 &=\Delta(\rho)+D(\Phi_*\rho\Vert\sigma)\nonumber\\
 &\ge\left(1+\frac{c_0\kappa}{1+\log C}\right)
D(\Phi_*\rho\Vert\sigma).
\end{align}
where $c_0= (1-\log 2)/4$. The case of arbitrary states follows from perturbation
$\rho_\eps=(1-\eps)\rho+\eps\omega$.

Applying the result to matrix amplification $\Phi^{(n)}=\Phi\otimes \id_{M_n}$ with $C(E^{(n)})\le
C_{\cb}(E)$ and the amplification invariance of $\kappa$ from
Proposition~\ref{prop:channel-stability} gives the complete entropy contraction.
\end{proof}

\section{MLSI for quantum Markov semigroups}
\label{sec:semigroups}
\subsection{Modified logarithmic Sobolev inequalities}

A QMS $P_t=e^{t\cL}:\cB(\cH)\to \cB(\cH)$ is a norm-continuous semigroup
of unital completely positive maps. Its generator $\cL$ has
the Gorini--Kossakowski--Lindblad--Sudarshan (GKLS) form
\cite[Theorem~2]{Lindblad1976}
\begin{align}
\cL(X)=i[H,X]+\sum_j\cD_{V_j}(X),\qquad
\cD_V(X)=V^*XV-\tfrac12\{V^*V,X\}.
\end{align}
where $H=H^*$ and $V\in \cB(\cH)$.  We assume that $P_t$ admits a faithful invariant state
$\sigma=P_{t*}(\sigma)$ and
the time limit \[
E=\displaystyle \lim_{t\to \infty }P_t
\]
exists. Thus $E=E\circ E$ a faithful conditional
expectation onto the fixed point space
\begin{align}
\cN=\Ran E=\Ker\cL=\{X\ |\ P_t(X)=X, \ \forall\  t\ge 0\}
\end{align}
Consequently,
\begin{align}
P_tE=EP_t=E,\qquad
P_{t*}E_*=E_*P_{t*}=E_*,\qquad
E_*\omega=\omega\ \Longleftrightarrow\ \cL_*\omega=0,
\end{align}
where $P_{t*}=e^{t\cL_*}$ is the trace adjoint as the state evolution, and $E_*\rho$ is its
limit equilibrium state.
The case $\cN=\mathbb C\Id$ is called primitive;
then $E(X)=\Tr(\sigma X)\Id$ is the completely depolarizing map for the unique invariant state $\sigma$.

For a faithful state $\rho$, the entropy production is
\cite{Spohn1978}
\begin{align}
\EP_{\cL}(\rho)
&=-\left.\frac{\dd}{\dd t}\right|_{t=0}
D(P_{t*}\rho\Vert E_*\rho)
=-\Tr\bigl[\cL_*(\rho)(\log\rho-\log E_*\rho)\bigr].
\end{align}
Data processing makes this quantity nonnegative. The exponential
decay estimate \eqref{eq:intro-decay} is equivalent to the modified log-Sobolev inequality
\begin{align}
2\alpha D(\rho\Vert E_*\rho)\le \EP_{\cL}(\rho),
\end{align}
for faithful states.
We define the optimal MLSI constant $\alpha$ and its complete version $\alpha^{\mathrm c}$ as
\begin{align}
\alpha(\cL)
&=\inf_{\substack{\rho>0,\ \Tr\rho=1\\\rho\ne E_*\rho}}
\frac{\EP_{\cL}(\rho)}{2D(\rho\Vert E_*\rho)},\qquad
\alpha^{\mathrm c}(\cL)=\inf_{n\ge1}\alpha(\cL\otimes \id_{M_n}).
\end{align}
For a faithful invariant state $\omega$, the GNS adjoint $\cL^\sharp$ and
symmetrization $\cL_{\GNS}$ are defined by
\begin{align}
\ip{X}{\cL(Y)}_{2,\omega}
&=\ip{\cL^\sharp(X)}{Y}_{2,\omega},\qquad
\cL_{\GNS}=\tfrac12(\cL+\cL^\sharp).
\end{align}
Set $Q_\omega(X)=-\Re\ip{X}{\cL(X)}_{2,\omega}$. The GNS gap is
\begin{align}
\lambda_{\GNS}(\cL)
&=\inf_{\substack{X\ne0\\EX=0}}
\frac{Q_\omega(X)}{\norm{X}_{2,\omega}^2}
=\min\spec\!\left(-\cL_{\GNS}\big|_{\Ker E}\right).
\label{eq:gns-gap}
\end{align}
The infimum is over all complex matrices $X$ such that $X\neq 0$ and $EX=0$. It can vanish even when
$\Ker\cL=\cN$ \cite[Section~6]{GaoGuo2026}.

\Needspace{8\baselineskip}
\begin{proposition}[GNS coercivity]

The adjoint $\cL^\sharp$ and the GNS gap $\lambda_{\GNS}(\cL)\ge0$ are independent of the choice of faithful
invariant state $\omega$. Moreover, $\lambda_{\GNS}(\cL)=\lambda_{\GNS}(\cL\otimes \id_{M_n}), n\ge
1$ is unchanged by matrix amplification.
For every faithful invariant state $\omega$ and every $X$,
\begin{align}
Q_\omega(X)
&=\frac12\sum_j\Tr\bigl(\omega[V_j,X]^*[V_j,X]\bigr)
\ge \lambda_{\GNS}(\cL) \Var_{E,\omega}(X).
\end{align}
\end{proposition}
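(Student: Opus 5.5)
The identity for $Q_\omega$ comes from a direct GKLS computation. The remaining claims then reduce to the block structure already exploited in Proposition~\ref{prop:channel-stability}. I would proceed as follows.

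\textbf{Step 1: the Dirichlet form identity.} For any $\omega$ with $\cL_*\omega=0$, write
\[
\Re\ip{X}{\cL X}_{2,\omega}=\tfrac12\Tr\bigl(\omega(X^*\cL X+\cL(X)^*X)\bigr).
\]
A direct GKLS computation gives the standard identity
\[
\cL(X^*X)-X^*\cL(X)-\cL(X)^*X=\sum_j[V_j,X]^*[V_j,X].
\]
The Hamiltonian part cancels since it is a derivation. Applying $\Tr(\omega\,\cdot)$ and using $\Tr(\omega\cL(X^*X))=\Tr(\cL_*(\omega)X^*X)=0$ yields $Q_\omega(X)=\frac12\sum_j\Tr(\omega[V_j,X]^*[V_j,X])\ge0$. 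Hence $\lambda_{\GNS}\ge0$.

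\textbf{Step 2: the coercivity inequality.} Because $E$ is the GNS orthogonal projection onto $\cN=\Ker\cL$, we have $[V_j,A]=0$ for $A\in\cN$. This holds because the fixed algebra of a faithful-invariant QMS lies in the commutant of $\{V_j,V_j^*,H\}$; it follows from $Q_\omega(A)=-\Re\ip{A}{\cL A}_{2,\omega}=0$ together with Step 1. So $[V_j,X]=[V_j,X-EX]$, and therefore $Q_\omega(X)=Q_\omega(X-EX)$. Applying the definition of $\lambda_{\GNS}$ to $X-EX\in\Ker E$ gives $Q_\omega(X)\ge\lambda_{\GNS}\norm{X-EX}^2_{2,\omega}=\lambda_{\GNS}\Var_{E,\omega}(X)$.

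\textbf{Step 3: independence of $\omega$ and amplification.} I would reuse the block decomposition from the proof of Proposition~\ref{prop:channel-stability}. Since $\cN$ is in the multiplicative domain of every $P_t$, each $P_t$ is $\cN$-bimodular, and differentiating shows $\cL$ is $\cN$-bimodular. Hence $\cL$ acts as $\id\otimes\cL_{kl}$ on $\cB(\cK_l,\cK_k)\otimes\cB(\cR_l,\cR_k)$. Expanding $X=\sum|\phi_{ki}\rangle\langle\phi_{lj}|\otimes X_{ki,lj}$ in an eigenbasis of $\omega=\bigoplus\eta_l\otimes\sigma_l$ gives
\[
\Re\ip{X}{\cL X}_{2,\omega}=\sum\lambda_{lj}\Re\Tr(\sigma_l X_{ki,lj}^*\cL_{kl}X_{ki,lj}),
\]
and $\norm{X}_{2,\omega}^2=\sum\lambda_{lj}\norm{X_{ki,lj}}_{kl}^2$. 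The Rayleigh quotient over $\Ker E$ therefore equals $\min_{k,l}$ of the gaps of $-\Re\cL_{kl}$ on $\mathcal V_{kl}$ with respect to $\norm{\cdot}_{kl}$. This involves no $\eta_l$, and amplification only tensors $\cK_k$ with $\mathbb C^n$, so the gap is unchanged. For $\cL^\sharp$, the same componentwise description applies: $\cL^\sharp$ acts as $\id\otimes\cL_{kl}^{\sharp}$, with the adjoint taken in $\norm{\cdot}_{kl}$. This uses the orthogonality weights $\lambda_{lj}$, which appear identically on both sides and cancel. Hence $\cL^\sharp$ is $\omega$-independent.

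\textbf{Main obstacle.} I expect the most care to be needed in Step 3, specifically in justifying that $\cL$ (and $\cL^\sharp$) respects the block and tensor structure exactly. This requires $\cL(AXB)=A\cL(X)B$ for $A,B\in\cN$, i.e.\ that $\cN$ lies in the multiplicative domain of each $P_t$. I would first derive this from faithfulness of $E$ with $P_tE=E$, using the Kadison--Schwarz equality $\Tr\omega(P_t(A^*A)-P_t(A)^*P_t(A))=0$ for $A\in\cN$.
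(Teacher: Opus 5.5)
Your proof is correct, and it differs from the paper's in two places.

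\textbf{Independence of $\omega$ and amplification.} The paper does not prove these claims. It cites Gao--Guo \cite[Propositions~5.1 and~5.3]{GaoGuo2026} for the $\omega$-independence of $\cL^\sharp$ and $\lambda_{\GNS}$ and for their amplification invariance. You instead re-derive them by transplanting the block decomposition of Proposition~\ref{prop:channel-stability} to the generator, using $\cN$-bimodularity of $\cL$. Your check that the weights $\lambda_{lj}$ factor out of both the quadratic form and the adjoint is exactly what makes this work, so your argument is self-contained where the paper relies on citation. The multiplicative-domain step you flag as the main obstacle is actually immediate. For $A\in\cN$, both $A^*A$ and $AA^*$ lie in $\cN$ and are fixed by $P_t$, so the Schwarz defects vanish as operators; no trace against $\omega$ is needed.

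\textbf{Coercivity inequality.} The paper writes $X=Y+EX$ with $Y=X-EX$. It uses $\cL E=0$, $E\cL=0$ and GNS orthogonality to kill the cross term $\ip{EX}{\cL Y}_{2,\omega}$, so it only manipulates the quadratic form $Q_\omega$. You instead extract from the Dirichlet identity and faithfulness of $\omega$ the structural fact $[V_j,A]=0$ for all $A\in\cN$. This makes the gradient $\sum_j[V_j,X]^*[V_j,X]$ itself invariant under centering, pointwise and not merely in $\omega$-expectation.

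\textbf{Comparison.} Your route yields the extra information $\cN\subseteq\{V_j\}'$ for every GKLS representation of $\cL$, but it works through a specific representation and genuinely uses faithfulness of $\omega$. The paper's route is representation-free and shorter, at the cost of leaning on the cited results for the first two claims.
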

\begin{proof}
Gao and Guo \cite[Proposition~5.1]{GaoGuo2026} show that
$\cL^\sharp$ is independent of $\omega$ and that, on $L_2(\omega)$,
\begin{align*}
-\cL_{\GNS}\ge0,\qquad
E\cL=\cL E=E\cL^\sharp=\cL^\sharp E=0.
\end{align*}
Moreover, \cite[Proposition~5.3]{GaoGuo2026} gives
\begin{align}
\lambda_{\GNS}(\cL^{(n)})&=\lambda_{\GNS}(\cL).
\end{align}
for all $n\ge 1$. In particular, $\lambda_{\GNS}(\cL)$ is a nonnegative spectral quantity
independent of $\omega$.
Note that these are Hilbert-space assertions; $\cL_{\GNS}$ need not generate a QMS.
The gradient form is
\begin{align*}
2\Gamma(X,X):&=\cL(X^*X)-\cL(X^*)X-X^*\cL(X)\\
&=\sum_j\bigl(V_j^*X^*XV_j-V_j^*X^*V_jX-X^*V_j^*XV_j+X^*V_j^*V_jX\bigr)\\
&=\sum_j[V_j,X]^*[V_j,X].
\end{align*}
Put $Y=X-EX$. Invariance $\cL_*\omega=0$ and GNS orthogonality give
\begin{align*}
\Tr[\omega\cL(X^*X)]&=0,\qquad
\Tr[\omega\cL(X^*)X]
=\overline{\Tr[\omega X^*\cL(X)]},\\
EY&=0,\qquad \cL(X)=\cL(Y),\qquad
\ip{EX}{\cL(Y)}_{2,\omega}
=\ip{EX}{E\cL(Y)}_{2,\omega}=0.
\end{align*}
Taking the invariant state for the gradient form identity, and
then applying \eqref{eq:gns-gap} yields
\begin{align*}
\frac12\sum_j\Tr\bigl(\omega[V_j,X]^*[V_j,X]\bigr)&=-\Re\ip{X}{\cL(X)}_{2,\omega}
=-\Re\ip{Y}{\cL(Y)}_{2,\omega}\\
&=\ip{Y}{-\cL_{\GNS}(Y)}_{2,\omega}\\
&\ge\lambda_{\GNS}(\cL)\norm{Y}_{2,\omega}^2=\lambda_{\GNS}(\cL)\Var_{E,\omega}(X). \qedhere
\end{align*}
\end{proof}
\subsection{Proof of Corollary~\ref{cor:main}}

We first record a finite-time consequence of Theorem~\ref{thm:channel}.
\begin{corollary}[Strict contraction at every positive time]

Let $P_t=e^{t\cL}$ have a faithful asymptotic conditional expectation
$E$. For a faithful invariant state $\omega$, put
\begin{align*}
s(t)&=\norm{P_t-E}_{2,\omega\to2,\omega},\qquad
\kappa(t)=1-s(t)^2.
\end{align*}
Then $\kappa(t)>0$ for every $t>0$, and
\begin{align*}
\eta(P_t\mid E)
&\le\left(1+\frac{c_0\kappa(t)}{1+\log C(E)}\right)^{-1}<1.
\end{align*}
The complete bound holds with $\eta^{\mathrm c}$ and $C_{\cb}(E)$.
These bounds do not require a positive GNS gap.
\end{corollary}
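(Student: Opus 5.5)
The plan is to apply Theorem~\ref{thm:channel} with $\Phi=P_t$ for a fixed $t>0$, so the only genuinely new point is $\kappa(t)>0$, that is, $s(t)<1$. The hypotheses of Theorem~\ref{thm:channel} hold for $P_t$. It is unital and completely positive, and $P_tE=EP_t=E$ holds because $E=\lim_{r\to\infty}P_r$ and the semigroup property gives $P_tE=\lim_r P_{t+r}=E$, and similarly $EP_t=E$. Once $s(t)<1$ is known, the ordinary bound on $\eta(P_t\mid E)$ is exactly the conclusion of Theorem~\ref{thm:channel} with $\kappa=\kappa(t)$. The complete bound with $C_{\cb}(E)$ also follows from that theorem, using Proposition~\ref{prop:channel-stability} for the amplification invariance of $s(t)$. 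Since $c_0>0$ and $\kappa(t)>0$, the resulting coefficient is strictly less than $1$.

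To prove $s(t)<1$, I will work in finite dimensions on the subspace $\Ker E\subset L_2(\omega)$. This subspace is invariant under $P_t$: for $EX=0$ we have $EP_tX=EX=0$. Moreover $(P_t-E)X=P_tX$ there, while on $\Ran E$ we have $(P_t-E)=0$. Hence $s(t)=\norm{P_t|_{\Ker E}}_{2,\omega\to2,\omega}$. Proposition~\ref{prop:channel-stability} already gives $s(t)\le1$, so I argue by contradiction.

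Suppose $s(t)=1$. By compactness of the unit sphere there is $X\in\Ker E$ with $\norm{X}_{2,\omega}=1$ and $\norm{P_tX}_{2,\omega}=1$. The identity in the proof of Proposition~\ref{prop:channel-stability} then gives
\[
\Tr\bigl[\omega\,\mathcal B_{P_t}(X)\bigr]=\norm{X}_{2,\omega}^2-\norm{P_tX}_{2,\omega}^2=0 .
\]
Since $\mathcal B_{P_t}(X)\ge0$ and $\omega$ is faithful, $\mathcal B_{P_t}(X)=0$, so $X$ lies in the multiplicative domain of $P_t$. By the same reasoning $X^*$ lies there as well, because $P_t$ is also a contraction on the $\omega$-weighted norm $X\mapsto\Tr(\omega XX^*)^{1/2}$. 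The intended conclusion is that $P_t$ is isometric on the nonzero subspace
\[
W=\{X\in\Ker E:\ \norm{P_tX}_{2,\omega}=\norm{X}_{2,\omega}\}.
\]
I will check that $W$ is a linear subspace: it is the eigenspace of $\id-P_t^\sharp P_t\ge0$ for the eigenvalue $0$. I then want $W$ to be $P_t$-invariant.

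Given that, the iterates $P_{nt}X=P_t^nX$ keep norm $1$. But $P_{nt}X\to EX=0$ as $n\to\infty$, which is a contradiction. This is the cleanest route: by invariance, $P_t^nX$ has norm $1$ for all $n$.

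The main obstacle is showing that $P_t$ maps $W$ into $W$. The plan is to use multiplicative-domain and Kadison--Schwarz equality theory for the unital completely positive map $P_t$ with faithful invariant state $\omega$. The multiplicative domain of $P_t$ is a $*$-algebra $\mathcal A$. By Kadison--Schwarz equality, $P_t$ restricted to $\mathcal A$ is a $*$-homomorphism that preserves $\omega$, so it is isometric in $L_2(\omega)$.

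If this is troublesome, I will avoid invariance altogether with a more robust alternative. Compactness over the unit sphere of $\Ker E$ gives a maximizer $X$ for $\norm{P_t^nX}$ whenever the norm stays $1$. I will then show directly that $\norm{P_{nt}|_{\Ker E}}=1$ for all $n$. This follows because $\norm{P_t^n}\le\norm{P_t}^n$ gives no help, but the set of norm-one vectors is closed under $P_t$. The reason is that equality forces $X$ into the multiplicative domain of $P_t$, where $P_t$ is a $\omega$-preserving $*$-homomorphism, and then $P_tX$ again attains equality. Either way, the contradiction comes from $P_{nt}\to E$ in norm in finite dimensions, so $s(nt)\to0$.

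Nothing in this argument uses $\lambda_{\GNS}(\cL)>0$, which confirms the final remark of the statement.
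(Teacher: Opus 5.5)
Your reduction matches the paper's: $P_tE=EP_t=E$, Theorem~\ref{thm:channel} with $\Phi=P_t$, amplification invariance from Proposition~\ref{prop:channel-stability}, and compactness of the unit sphere of $\Ker E$. So everything rests on excluding a unit vector $X\in\Ker E$ with $\norm{P_tX}_{2,\omega}=1$, and that is where there is a genuine gap.

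From $\mathcal B_{P_t}(X)=0$ you only get that $X$ lies in the (right) multiplicative domain $\mathcal A$ of $P_t$, on which $P_t$ is $\omega$-isometric. Nothing forces $P_tX\in\mathcal A$, since $P_t(\mathcal A)\not\subseteq\mathcal A$ in general. So ``$P_tX$ again attains equality'' does not follow. Similarly, contractivity in the norm $\Tr(\omega XX^*)^{1/2}$ does not give equality for $X^*$.

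In fact your argument uses only that $P_t$ is unital completely positive with $P_tE=EP_t=E$ and $P_t^n\to E$, and for such maps the conclusion is false. Take the stochastic matrix $K$ on $\{1,2,3\}$ with $K_{12}=K_{23}=1$ and $K_{31}=K_{32}=\tfrac12$. Realize it on $M_3$ as $\Phi(X)=\sum_i\bigl(\sum_jK_{ij}X_{jj}\bigr)|i\rangle\langle i|$, with $\sigma=\mathrm{diag}(\tfrac15,\tfrac25,\tfrac25)$ and $E(X)=\Tr(\sigma X)\Id$. The chain is irreducible with cycles of lengths $2$ and $3$, so $\Phi^n\to E$. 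Yet $X=\mathrm{diag}(2,2,-3)$ satisfies $EX=0$ and $\norm{\Phi X}_{2,\sigma}^2=\norm{X}_{2,\sigma}^2=6$, while $\norm{\Phi^2X}_{2,\sigma}^2=\tfrac72$. So here $s=1$, and the norm-attaining set is not $\Phi$-invariant.

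The missing ingredient is the continuous-time structure. Set $g(r)=\norm{P_rX}_{2,\omega}^2$. The paper uses that $g$ is nonincreasing in $r$, which follows from Kadison--Schwarz plus invariance, as you already have. Hence $g(t)=g(0)=1$ forces $g\equiv1$ on $[0,t]$. Since $g(r)=\Tr\bigl(\omega(e^{r\cL}X)^*e^{r\cL}X\bigr)$ is real analytic, $g\equiv1$ on $[0,\infty)$, contradicting $P_rX\to EX=0$. For semigroups this does imply $P_tX\in W$, but only as a consequence of this argument. With that step replaced, the rest of your proposal goes through.
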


\begin{proof}
For $X\in\Ker E$, Schwarz and invariance give
\begin{align*}
\norm{P_{t+h}X}_{2,\omega}^2
&\le\Tr\bigl[\omega P_h((P_tX)^*P_tX)\bigr]
=\norm{P_tX}_{2,\omega}^2,
\qquad t,h\ge0.
\end{align*}
If $\norm{X}_{2,\omega}=\norm{P_hX}_{2,\omega}=1$ for some $h>0$,
monotonicity yields
\begin{align*}
\norm{P_tX}_{2,\omega}^2=1\quad(0\le t\le h)
\quad\Longrightarrow\quad
\norm{P_tX}_{2,\omega}^2=1\quad(t\ge0).
\end{align*}
The implication uses real analyticity of the matrix exponential.
It contradicts $P_tX\to EX=0$. Compactness of the unit sphere of
$\Ker E$ now gives $s(h)<1$, and Theorem~\ref{thm:channel} applies.

For any fixed $h>0$, write $t=mh+r$, where $m=\lfloor t/h\rfloor$ and
$0\le r<h$. With $\sigma=E_*\rho$, iteration and data processing give
\begin{align*}
D(P_{t*}\rho\Vert\sigma)
&\le D(P_{mh*}\rho\Vert\sigma)\\
&\le\left(1+\frac{c_0\kappa(h)}{1+\log C(E)}\right)^{-m}
D(\rho\Vert\sigma).
\end{align*}
Thus
\begin{align}
\eta(P_t\mid E)
&\le\left(1+\frac{c_0\kappa(h)}{1+\log C(E)}\right)^{-\lfloor t/h\rfloor}.
\end{align}
Applying the complete channel bound at each step proves the same
formula for $\eta^{\mathrm c}$ with $C_{\cb}(E)$.
\end{proof}

\begin{proof}[Proof of Corollary~\ref{cor:main}]
Let $P_t=e^{t\cL}$ be QMS with a faithful asymptotic conditional expectation
$E$. Fix any faithful invariant state $\omega$. Set
\begin{align*}
s(t)&=\norm{P_t-E}_{2,\omega\to2,\omega},\qquad
\kappa(t)=1-s(t)^2.
\end{align*}
Note that for any $X\in B(\cH)$,
\begin{align*} \norm{(P_t-E)(X)}_{2,\omega}^2=& \langle (P_t-E)(X), (P_t-E)(X)\rangle_{\omega}\\
=& \langle (X), (P_t^\sharp-E)(P_t-E)(X)\rangle_{\omega}\\
=& \langle (X), (P_t^\sharp P_t-E)(X)\rangle_{\omega}
\end{align*}
where $P_t^\sharp$ is the GNS-adjoint of $P_t$ with respect to $L_2(\omega)$ and we used $P_t^\sharp
E=EP_t=E$. Here $P_h^\sharp P_h$ is a self-adjoint operator on $L_2(\omega)$ with the expansion,
\begin{align*}
P_h^\sharp P_h
&=e^{h\cL^\sharp}e^{h\cL}
=\id+h(\cL^\sharp+\cL)+O(h^2)\\
&=\id+2h\cL_{\GNS}+O(h^2),
\end{align*}
Note that $E$ is a projection commute with $P_h^\sharp P_h$, and $\text{Ran}(E)$ is the eigenspace
of $P_h^\sharp P_h$ with eigenvalue $1$. By \eqref{eq:gns-gap},
we have on the centered space $\Ker E$,
\begin{align*}
s(h)^2
&=\max\spec\!\left(P_h^\sharp P_h\big|_{\Ker E}\right)
=1-2\lambda_{\GNS}(\cL)h+O(h^2),\\
\kappa(h)&=1-s(h)^2=2\lambda_{\GNS}(\cL)h+O(h^2).
\end{align*}
Put $C=C(E)$ and $\sigma=E_*\rho$. For $t>0$, take $h=t/m$. Apply the channel estimate
Theorem~\ref{thm:channel} for the iterated form $P_{t*}=P_{h*}^m$,
\begin{align*}
D(P_{t*}\rho\Vert E_*\rho )=D(P_{\frac{t}{m}*}^m\rho\Vert E_*\rho )
\le&\left(1+\frac{c_0\kappa(t/m)}{1+\log C}\right)^{-m}
D(\rho\Vert\sigma),\\
m\log\left(1+\frac{c_0\kappa(t/m)}{1+\log C}\right)
=&\frac{2c_0\lambda_{\GNS}(\cL)t}{1+\log C}+O(m^{-1}).
\end{align*}
Letting $m\to\infty$ gives
\begin{align*}
D(P_{t*}\rho\Vert\sigma)
&\le\exp\!\left(-\frac{2c_0\lambda_{\GNS}(\cL)t}{1+\log C}\right)
D(\rho\Vert\sigma),\\
\alpha(\cL)&\ge\frac{c_0\lambda_{\GNS}(\cL)}{1+\log C}.
\end{align*}
Applying the estimate for all matrix amplification $P_t^{(n)}=P_t\otimes \id_{M_n}$ with the
complete index $C_{\cb}(E)$ yields the complete entropy decay.
\end{proof}

\begin{remark}
For GNS-symmetric
quantum Markov semigroups, Gao, Junge, LaRacuente, and Li
obtained the following complete MLSI bound with logarithmic index
dependence
\cite[Corollary~4.13]{GaoJungeLaRacuenteLi2025}
\begin{align*}
\alpha^{\mathrm c}(\cL)
 &\ge \frac{\lambda_{\GNS}(\cL)}
 {2\log(10C_{\cb}(E))}.
\end{align*}
This bound is stronger than that of Corollary~\ref{cor:main}
when GNS symmetry holds. Indeed, the constant
\[ c_0=(1-\ln 2)/4\approx 0.07<0.5\]
Their tracial result
\cite[Theorem~2.5]{GaoJungeLaRacuenteLi2025} also gives
complete entropy contraction for non-symmetric bistochastic channels.
Our Theorem~\ref{thm:channel} and Corollary~\ref{cor:main} retain
the same logarithmic index dependence for general faithful invariant
states without assuming detailed balance.

\end{remark}

\section{Applications and examples}
\label{sec:applications}

We apply our estimates to alternating pulses and continuous coupling
on a $d$-level system, $d\ge3$. Let $\{|j\rangle\}_{j=0}^{d-1}$
label the levels, and fix downward and upward rates $p>q>0$,
independently of $d$. Each adjacent pair is coupled to an independent
reservoir. For $0\le j\le d-2$, define
\begin{align*}
A_j&=|j\rangle\langle j+1|,\qquad
W_j=\sqrt p A_j+\sqrt q A_j^*,\\
\cD_V(X)&=V^*XV-\tfrac12\{V^*V,X\},
\qquad V,X\in\cB(\mathbb C^d).
\end{align*}
Here $A_j$ lowers level $j+1$ to $j$, $W_j$ combines downward and
upward jumps, and $\cD_V$ is the Heisenberg dissipator.
For squeezing parameters $\theta_j\in[0,1]$, set
\begin{equation}
\begin{aligned}
\mathcal G_j(\theta_j)(X)
 &=\theta_j\cD_{W_j}(X)
 +(1-\theta_j)(p\cD_{A_j}(X)+q\cD_{A_j^*}(X))\\
 &=p\cD_{A_j}(X)+q\cD_{A_j^*}(X)
 +\theta_j\sqrt{pq}(A_j^*XA_j^*+A_jXA_j).
\end{aligned}
\label{eq:local-reservoir-generator}
\end{equation}
At $\theta_j=0$ the jumps act separately; at $\theta_j=1$ they combine
into $W_j$. The cross terms vanish on diagonal observables.
Define the level observable and the equilibrium quantities by
\begin{align}
J_d&=\sum_{j=0}^{d-1}j|j\rangle\langle j|,\qquad
\beta=\log(p/q),\nonumber\\
Z_d&=\Tr(e^{-\beta J_d})=\sum_{j=0}^{d-1}e^{-j\beta},\qquad
\pi_j=Z_d^{-1}e^{-j\beta},\nonumber\\
\sigma_d&=Z_d^{-1}e^{-\beta J_d}
=\sum_{j=0}^{d-1}\pi_j|j\rangle\langle j|,
\qquad E_d(X)=\Tr(\sigma_dX)\Id,
\end{align}
where $\beta>0$ is fixed independently of $d$. The balance relation
$\pi_jq=\pi_{j+1}p$ gives $\mathcal G_{j*}(\theta_j)\sigma_d=0$.
Both applications use the complete index and the relative entropy of
the uniform state:
\begin{align}
C_{\cb}(E_d)&=\Tr(\sigma_d^{-1})=Z_d^2e^{(d-1)\beta},\nonumber\\
D_d&:=D(\Id/d\Vert\sigma_d)
=\tfrac12(d-1)\beta+\log Z_d-\log d.
\label{eq:reservoir-index-entropy}
\end{align}
Since $1\le Z_d\le(1-e^{-\beta})^{-1}$,
$\log C_{\cb}(E_d)\asymp D_d\asymp d$.

We record the action of the local generators on diagonal observables
and coherences. These spaces are invariant and GNS-orthogonal.
Identify diagonal observables with $L_2(\pi)$, with norm
$\|x\|_\pi^2=\sum_i\pi_i|x_i|^2$. The restrictions $Q_j$ of
$\mathcal G_j(\theta_j)$ in \eqref{eq:local-reservoir-generator} satisfy
\begin{align*}
(Q_jx)_j&=q(x_{j+1}-x_j),\qquad
(Q_jx)_{j+1}=p(x_j-x_{j+1}),\\
(Q_jx)_i&=0\quad(i\notin\{j,j+1\}).
\end{align*}
The tridiagonal recurrence gives, for $Q=\sum_jQ_j$,
\begin{align}
\spec(-Q)&=\{0\}\cup
\{p+q-2\sqrt{pq}\cos(k\pi/d):1\le k\le d-1\}.
\label{eq:reservoir-classical-gap}
\end{align}
For $E_{jk}=|j\rangle\langle k|$, $j<k$, each coherence space
$\operatorname{span}\{E_{jk},E_{kj}\}$ is invariant. Writing
$b_i^{(l)}$ for the exit rate from level $i$ due to reservoir $l$,
\begin{align}
b_i^{(l)}&=q\mathbf1_{i=l}+p\mathbf1_{i=l+1},\nonumber\\
\mathcal G_l(\theta_l)E_{jk}
&=-\tfrac12(b_j^{(l)}+b_k^{(l)})E_{jk}
 +\theta_l\sqrt{pq}\,
  \mathbf1_{(j,k)=(l,l+1)}E_{kj}.
\label{eq:reservoir-coherences}
\end{align}

\subsection{Alternating reservoir pulses}

For coupling weights $w_j\ge w_->0$ and pulse duration $\tau>0$, define
\begin{align}
\cL_e&=\sum_{j\ {\rm even}}w_j\mathcal G_j(\theta_j),\qquad
\cL_o=\sum_{j\ {\rm odd}}w_j\mathcal G_j(\theta_j),\nonumber\\
T_e&=e^{\tau\cL_e},\qquad
T_o=e^{\tau\cL_o},\qquad
\Phi_d=T_oT_e.
\end{align}
Both pulses preserve $\sigma_d$, so $\Phi_dE_d=E_d\Phi_d=E_d$.
On states, the odd pulse acts first: $\Phi_{d*}=T_{e*}T_{o*}$.
Each pulse fixes states obtained by reweighting $\sigma_d$ on its
disjoint pairs and unpaired levels, hence
$\eta(T_a\mid E_d)=\eta^{\mathrm c}(T_a\mid E_d)=1$ for $a=e,o$.

Set
\begin{align}
\kappa_*&=\min\left\{
\frac{(1-e^{-2w_-(p+q)\tau})(\sqrt p-\sqrt q)^2}{3(p+q)},\,
1-e^{-w_-q\tau}\right\}>0.
\label{eq:pulse-gap-constant}
\end{align}

\begin{proposition}[Relative entropy contraction for alternating pulses]
With $\kappa_*$ as in \eqref{eq:pulse-gap-constant},
\begin{align}
\frac{c_0\kappa_*}{1+\log C_{\cb}(E_d)+c_0\kappa_*}
&\le1-\eta^{\mathrm c}(\Phi_d\mid E_d)
\le1-\eta(\Phi_d\mid E_d)
\le\frac{2\beta}{D_d}.
\end{align}
Both deficits are of order $d^{-1}$, uniformly in $\theta_j,w_j$,
with constants depending only on $p,q,\tau,w_-$.
\end{proposition}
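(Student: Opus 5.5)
The middle inequality $1-\eta^{\mathrm c}(\Phi_d\mid E_d)\le1-\eta(\Phi_d\mid E_d)$ is immediate, since $\eta\le\eta^{\mathrm c}$ (take $n=1$). For the lower bound I will apply Theorem~\ref{thm:channel} in its complete form, with $C_{\cb}(E_d)$. The elementary identity
\[
1-\Bigl(1+\tfrac{c_0\kappa}{1+\log C}\Bigr)^{-1}=\tfrac{c_0\kappa}{1+\log C+c_0\kappa}
\]
is increasing in $\kappa$. So it suffices to prove
\[
\kappa=1-\norm{\Phi_d-E_d}_{2,\sigma_d\to2,\sigma_d}^2\ge\kappa_* ,
\]
which is unchanged under amplification by Proposition~\ref{prop:channel-stability}. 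For the upper bound I will exhibit one test state.

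\textbf{Upper bound.} Take $\rho=\Id/d$, so $D(\rho\Vert\sigma_d)=D_d$. Since $\sigma_d$ is diagonal, for any state $\tau'$ we have
\[
D(\tau'\Vert\sigma_d)=-S(\tau')+\beta\Tr(\tau'J_d)+\log Z_d\ge-\log d+\beta\Tr(\tau'J_d)+\log Z_d .
\]
Each pulse is generated by the $\mathcal G_j$, which move population only between levels $j$ and $j+1$ within disjoint pairs, and whose cross terms vanish on diagonal observables. Hence $T_a(J_d)-J_d$ has operator norm at most $1$: heuristically $T_a(J_d)=J_d+\sum_j(\text{something in }[-1,1])$ supported on the disjoint pairs. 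I will verify this via $\cL_a(J_d)$ being diagonal and the two-level dynamics preserving $0\le x\le1$ on each pair. Therefore $\Tr(\Phi_{d*}\rho\,J_d)\ge\Tr(\rho J_d)-2$, and
\[
D(\Phi_{d*}\rho\Vert\sigma_d)\ge\tfrac12(d-1)\beta-2\beta+\log Z_d-\log d=D_d-2\beta .
\]
So $\eta\ge1-2\beta/D_d$. Combined with $\log C_{\cb}(E_d)\asymp D_d\asymp d$ from \eqref{eq:reservoir-index-entropy}, both deficits are of order $d^{-1}$.

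\textbf{Lower bound on $\kappa$.} The diagonal observables and the coherence spaces $\operatorname{span}\{E_{jk},E_{kj}\}$ are invariant under both pulses and mutually GNS-orthogonal, and $E_d$ lives on the diagonal part. It therefore suffices to bound $\norm{T_oT_e}$ on each coherence block, and $\norm{T_oT_e-E_d}$ on $L_2(\pi)$.

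\emph{Coherences.} By \eqref{eq:reservoir-coherences}, each pulse acts on $E_{jk}$ by the damping $-\tfrac12\sum_l w_l(b^{(l)}_j+b^{(l)}_k)$, plus, when $(j,k)$ is itself a coupled pair, a $2\times2$ mixing with $E_{kj}$ of size $\theta_l\sqrt{pq}w_l$. Every level $i$ lies in some coupled pair of one of the two pulses, where its exit rate is at least $q$. So the product of the two pulses contracts each coherence by at least $e^{-w_-q\tau/2}$ in GNS norm, i.e.\ the squared norm is at most $e^{-w_-q\tau}$. For the mixing $2\times2$ block I will compute the GNS norm directly, using the weights $\pi_j,\pi_{j+1}$. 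The claim is that its symmetric part has top eigenvalue $-\tfrac12(p+q)+\theta\sqrt{pq}\le-q$ (since $p+q-2\sqrt{pq}\ge0$, i.e.\ $\tfrac12(p+q)-\sqrt{pq}\ge0$), after checking that the GNS metric matches the balance $\pi_jq=\pi_{j+1}p$. This gives the second entry $1-e^{-w_-q\tau}$ of \eqref{eq:pulse-gap-constant}.

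\emph{Diagonal part.} On $L_2(\pi)$, each $T_a$ is a $\pi$-self-adjoint Markov operator, a product over disjoint pairs of two-point chains. Thus $T_a=\Pi_a+r_a(\Id-\Pi_a)$ blockwise, where $\Pi_a$ is the $\pi$-orthogonal projection onto functions constant on the $a$-pairs and $0\le r_a\le e^{-w_-(p+q)\tau}=:r$. For $f\perp1$ I will show
\[
\norm{T_oT_ef}^2\le\norm{f}^2-(1-r^2)\bigl(\norm{(\Id-\Pi_e)f}^2+\norm{(\Id-\Pi_o)\Pi_ef}^2\bigr).
\]
I will then bound the bracket below by $\tfrac13\,\mathrm{gap}\cdot\norm f^2$, where $\mathrm{gap}$ is the gap of $Q_e+Q_o$ normalized by $p+q$. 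By \eqref{eq:reservoir-classical-gap} this is at least $(\sqrt p-\sqrt q)^2/(p+q)$. The required comparison is $\ip{f}{(\Id-\Pi_e)f}+\ip{f}{(\Id-\Pi_o)f}\gtrsim$ Dirichlet form, via a triangle inequality producing the constant $3$.

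\textbf{Main obstacle.} I expect the diagonal estimate to be the hardest step. Turning the two non-commuting projections $\Pi_e,\Pi_o$ into a uniform-in-$d$ gap, with exactly the constant $1/3$ and without losing a factor of $d$, is where the argument is most delicate. I would first try the Cauchy--Schwarz splitting
\[
\norm{(\Id-\Pi_o)f}^2\le2\norm{(\Id-\Pi_o)\Pi_ef}^2+2\norm{(\Id-\Pi_e)f}^2 .
\]
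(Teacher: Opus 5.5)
Your middle inequality, the reduction of the lower bound to $\kappa\ge\kappa_*$ via the complete form of Theorem~\ref{thm:channel}, and the upper bound are correct and agree with the paper. Your bound $\norm{T_a(J_d)-J_d}\le1$ is a clean way to make ``each pulse moves at most one level'' precise. The estimate of $\kappa$ follows the paper's plan too, but two steps you commit to are false as stated.

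\emph{Adjacent coherences.} Take the GNS-orthonormal basis $E_{j,j+1}/\sqrt{\pi_{j+1}}$, $E_{j+1,j}/\sqrt{\pi_j}$ and use $\pi_j/\pi_{j+1}=p/q$. In this basis $\mathcal G_j(\theta)$ has matrix $\begin{pmatrix}-(p+q)/2&\theta q\\ \theta p&-(p+q)/2\end{pmatrix}$. Its symmetric part has top eigenvalue $-(1-\theta)(p+q)/2$. The number $-\tfrac12(p+q)+\theta\sqrt{pq}$ you quote is an eigenvalue of the non-normal generator itself, and it does not control GNS norms. Even that number is not $\le-q$: at $\theta=1$ it equals $-\tfrac12(\sqrt p-\sqrt q)^2$, and your parenthetical only gives $\le0$. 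At $\theta_j=1$, the pulse containing edge $j$ has squared GNS norm $\ge1-C\tau^2$ on this block. So it cannot produce $1-e^{-w_-q\tau}$ uniformly in $\theta_j$.

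The missing idea is to take the damping from the \emph{other} pulse. That pulse contains edge $j-1$ and/or $j+1$, and it acts on $\operatorname{span}\{E_{j,j+1},E_{j+1,j}\}$ as the scalar $e^{-\tau\ell_j/2}$, where $\ell_j=pw_{j-1}+qw_{j+1}\ge w_-q$. A neighbouring edge exists because $d\ge3$. The mixing pulse is then used only as a GNS contraction (Kadison--Schwarz). This is how the paper argues.

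\emph{Diagonal part.} Your inequality $\norm{T_oT_ef}^2\le\norm{f}^2-(1-r^2)\bigl(\norm{(\Id-\Pi_e)f}^2+\norm{(\Id-\Pi_o)\Pi_ef}^2\bigr)$ is false. Take $d=3$ and $w_0=w_-$. Let $f'\perp1$ be nonzero and constant on $\{1,2\}$, and set $f=\Pi_ef'+r^{-1}(\Id-\Pi_e)f'$. Then $T_ef=f'=T_of'$, so $\norm{T_oT_ef}^2=\norm{f}^2-(1-r^2)\norm{(\Id-\Pi_e)f}^2$ exactly. However, $(\Id-\Pi_o)\Pi_ef=(\Id-\Pi_o)\Pi_ef'\neq0$, so your right-hand side is strictly smaller.

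Applying the one-pulse identity to $g=T_ef$ produces $\norm{(\Id-\Pi_o)T_ef}^2$, not $\norm{(\Id-\Pi_o)\Pi_ef}^2$. The splitting must therefore be centred at $T_ef$: $\norm{(\Id-\Pi_o)f}^2\le2\norm{(\Id-\Pi_o)T_ef}^2+2\norm{f-T_ef}^2$. Here $\norm{f-T_ef}\le\norm{(\Id-\Pi_e)f}$, because $f-T_ef$ is $(\Id-\Pi_e)f$ rescaled on each even pair by $1-r_j\in[0,1]$. This gives exactly the factor $3$ in $\kappa_*$; it is the paper's $y=T_ex$ step. Routing through $\Pi_ef$ needs an extra triangle inequality and gives a worse constant.

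Finally, the comparison with the Dirichlet form is an identity, not an estimate: $(p+q)\bigl((\Id-\Pi_e)+(\Id-\Pi_o)\bigr)=-Q$ on diagonal observables. The triangle inequality belongs to the previous step. After the identity, \eqref{eq:reservoir-classical-gap} together with $\cos(\pi/d)\le1$ finishes the argument.
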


\begin{proof}
For the lower bound, first consider diagonal observables.
Write $-Q_j=(p+q)P_j$, where $P_j$ are rank-one orthogonal
projections. Projections belonging to disjoint edges have orthogonal
ranges. Set
\begin{align*}
P_e&=\sum_{j\ {\rm even}}P_j,\qquad
P_o=\sum_{j\ {\rm odd}}P_j,\qquad
a_j=1-e^{-w_j(p+q)\tau},\\
T_e&=\id-\sum_{j\ {\rm even}}a_jP_j,\qquad
T_o=\id-\sum_{j\ {\rm odd}}a_jP_j.
\end{align*}
Here the pulse formulas are restricted to diagonal observables.
For $\sum_j\pi_jx_j=0$, let $y=T_ex$ and
$b_-=1-e^{-2w_-(p+q)\tau}$.
Orthogonality and $\|x-y\|_\pi\le\|P_ex\|_\pi$ give
\begin{align*}
\|x\|_\pi^2-\|\Phi_dx\|_\pi^2
&=\sum_{j\ {\rm even}}(2a_j-a_j^2)\|P_jx\|_\pi^2
 +\sum_{j\ {\rm odd}}(2a_j-a_j^2)\|P_jy\|_\pi^2\\
&\ge b_-(\|P_ex\|_\pi^2+\|P_oy\|_\pi^2),\\
\|P_ex\|_\pi^2+\|P_ox\|_\pi^2
&\le\|P_ex\|_\pi^2+2\|P_oy\|_\pi^2+2\|x-y\|_\pi^2\\
&\le3(\|P_ex\|_\pi^2+\|P_oy\|_\pi^2).
\end{align*}
Since $(p+q)(P_e+P_o)=-Q$,
\begin{align*}
\|x\|_\pi^2-\|\Phi_dx\|_\pi^2
&\ge\frac{b_-}{3(p+q)}\langle x,-Qx\rangle_\pi\\
&\ge\frac{b_-(\sqrt p-\sqrt q)^2}{3(p+q)}\|x\|_\pi^2.
\end{align*}

For coherences, use \eqref{eq:reservoir-coherences}.
On adjacent coherences, the pulse containing edge $j$ is a GNS
contraction; the other is $e^{-\tau\ell_j/2}\id$, where
\begin{align*}
\ell_j&=pw_{j-1}+qw_{j+1}\ge w_-q,
\qquad w_{-1}=w_{d-1}=0,\\
\|\Phi_dX\|_{2,\sigma_d}^2
&\le e^{-w_-q\tau}\|X\|_{2,\sigma_d}^2.
\end{align*}
On nonadjacent coherences, both pulses are scalar, and each level has
total exit rate at least $w_-q$, giving the same bound. Hence
\begin{align}
1-\|\Phi_d-E_d\|_{2,\sigma_d\to2,\sigma_d}^2
&\ge\kappa_*.
\end{align}
Thus $\Phi_d^n\to E_d$, and Theorem~\ref{thm:channel} gives the lower bound.

For the upper bound, take $\rho_0=\Id/d$ and $\rho_1=\Phi_{d*}\rho_0$.
The states remain diagonal, and each pulse moves at most one level, so
\begin{align*}
\Tr[J_d(\rho_0-\rho_1)]&\le2,\qquad S(\rho_1)\le\log d,\\
D_d-D(\rho_1\Vert\sigma_d)
&=S(\rho_1)-\log d
 +\beta\Tr[J_d(\rho_0-\rho_1)]\le2\beta.
\end{align*}
Testing on $\rho_0$ proves the upper bound. The order follows from
\eqref{eq:reservoir-index-entropy}.
\end{proof}

\begin{remark}[Symmetry and QMS embedding]

Even at $\theta_j=0$, $\Pr_{\Phi_{d*}}(0\to2)=0$ but
$\Pr_{\Phi_{d*}}(2\to0)>0$, violating GNS symmetry. Thus
\cite[Theorem~4.8]{GaoJungeLaRacuenteLi2025} does not apply directly.
Also,
\begin{align*}
\Phi_{d*}(E_{00})&=(1-u)E_{00}+uE_{11},\qquad
u=\frac{q}{p+q}(1-e^{-w_0(p+q)\tau})\in(0,1).
\end{align*}
This excludes $\Phi_{d*}\rho\ge\varepsilon\sigma_d$ uniformly in $\rho$
for $\varepsilon>0$. Any QMS embedding would preserve $\sigma_d$ by
commutation with $\Phi_d$ and uniqueness of its invariant state.
Trace-norm contractivity then implies relaxation, hence positivity
improvement by
\cite[Theorem~4.2(6)--(7)]{FagnolaGirotti2026}, contradicting this
rank-two output. See \cite{SchnellEckardtDenisov2020} for Floquet embedding.
\end{remark}

\subsection{Continuous reservoir coupling}

For simultaneous coupling with common squeezing parameter $\theta$, set
\begin{align}
\cL_{d,\theta}&=\sum_{j=0}^{d-2}\mathcal G_j(\theta),
\qquad 0\le\theta\le1.
\end{align}

\begin{proposition}[Ordinary and complete MLSI bounds for a squeezed ladder]
\label{prop:squeezed-ladder}
The asymptotic conditional expectation is $E_d$. With
$g_*=\min\{(\sqrt p-\sqrt q)^2,q/2\}$,
\begin{align}
\alpha^{\mathrm c}(\cL_{d,\theta})
&\ge\frac{c_0g_*}{1+2\log Z_d+(d-1)\beta},
\\
\alpha(\cL_{d,\theta})
&\le\frac{(d-1)(p-q)\beta}{2dD_d}.
\end{align}
Both constants are of order $d^{-1}$, uniformly in $\theta$.
\end{proposition}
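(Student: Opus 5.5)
The argument has four parts: identify the asymptotic conditional expectation, bound the GNS gap, apply Corollary~\ref{cor:main} with $C_{\cb}(E_d)$, and test the MLSI quotient on the uniform state.

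\textbf{Step 1: fixed points.} We show that $\cN=\Ker\cL_{d,\theta}=\mathbb C\Id$ and that $\sigma_d$ is invariant, so that $E=\lim_t P_t=E_d$. Invariance was already recorded after \eqref{eq:local-reservoir-generator}. The kernel splits along the GNS-orthogonal invariant decomposition into diagonals and coherence spaces.
\begin{itemize}
\item On diagonals, $Q$ has the simple eigenvalue $0$ by \eqref{eq:reservoir-classical-gap}, with eigenvector $\Id$.
\item On coherences, \eqref{eq:reservoir-coherences} shows that each $\operatorname{span}\{E_{jk},E_{kj}\}$ carries a strictly dissipative $2\times2$ block. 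Its negative Hermitian part is at least $\tfrac12(b_j+b_k)-\theta\sqrt{pq}$, where $b_i=\sum_l b_i^{(l)}$ is the total exit rate.
\end{itemize}

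\textbf{Step 2: the GNS gap.} We aim for $\lambda_{\GNS}(\cL_{d,\theta})\ge g_*$, treating the two invariant pieces separately.
\begin{itemize}
\item On centered diagonals, the restriction is reversible (detailed balance $\pi_jq=\pi_{j+1}p$), hence self-adjoint in $L_2(\pi)$. Its gap is $p+q-2\sqrt{pq}\cos(\pi/d)\ge(\sqrt p-\sqrt q)^2$.
\item On coherences we use the GNS norm $\|E_{jk}\|^2_{2,\sigma_d}=\pi_k$, so we must symmetrize the $2\times2$ block with weights $\pi_k,\pi_j$.
\end{itemize}
\emph{Nonadjacent coherences.} The block is diagonal with entries $-\tfrac12(b_j+b_k)$. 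Every level has exit rate at least $q$ (level $0$ has exactly $q$; interior levels have $p+q$; the top level has $p$). So the rate is at least $q\ge q/2$.

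\emph{Adjacent coherences $(l,l+1)$.} This is the main obstacle: squeezing produces the off-diagonal term $\theta\sqrt{pq}$, and GNS symmetrization weights it by $\sqrt{\pi_l\pi_{l+1}}$-type factors. Concretely, the symmetrized form is
\[
\tfrac12(b_l+b_{l+1})\big(\pi_{l+1}|x|^2+\pi_l|y|^2\big)\;-\;\theta\sqrt{pq}\,\Re\big(\pi_l\bar y x\ \text{or}\ \pi_{l+1}\bar x y\big).
\]
Using $\pi_{l+1}/\pi_l=q/p$ and the AM--GM inequality, the cross term is dominated by $\tfrac{\sqrt{pq}}2(\pi_{l+1}|x|^2+\pi_l|y|^2)\sqrt{p/q}$-type expressions. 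The endpoint and interior exit rates (for instance $b_l+b_{l+1}\ge p+q$ counting only the $l$-th reservoir) should then leave a margin of at least $q/2$ uniformly in $\theta\in[0,1]$. This calculation is where I expect the delicate case analysis, especially at the ladder endpoints $l=0$ and $l=d-2$. If the margin turns out smaller, one adds in the other reservoirs' contributions to $b_l,b_{l+1}$.

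\textbf{Step 3: lower bound.} By Corollary~\ref{cor:main},
\[
\alpha^{\mathrm c}(\cL_{d,\theta})\ge \frac{c_0\lambda_{\GNS}(\cL_{d,\theta})}{1+\log C_{\cb}(E_d)}\ge\frac{c_0g_*}{1+\log C_{\cb}(E_d)}.
\]
Substituting $\log C_{\cb}(E_d)=2\log Z_d+(d-1)\beta$ from \eqref{eq:reservoir-index-entropy} gives the stated lower bound.

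\textbf{Step 4: upper bound.} We test the MLSI quotient on $\rho=\Id/d$, for which $\EP=-\Tr[\cL_*(\rho)(\log\rho-\log\sigma_d)]=-\Tr[\cL_*(\rho)\,\beta J_d]$.
\begin{itemize}
\item We have $-\Tr[\rho\,\cL(\beta J_d)]=-\tfrac{\beta}{d}\Tr\cL(J_d)$, and squeezing terms vanish on diagonals.
\item From the formulas for $Q_j$ with $x=J_d$: $(Q_jx)_j=q$ and $(Q_jx)_{j+1}=-p$. Summing over $j$ gives $\Tr\cL(J_d)=(d-1)(q-p)$.
\item Hence $\EP=\beta(d-1)(p-q)/d$, and $\alpha\le\EP/(2D_d)$ is the stated bound.
\end{itemize}
Finally, $\log C_{\cb}(E_d)\asymp D_d\asymp d$ gives order $d^{-1}$ for both constants, uniformly in $\theta$.
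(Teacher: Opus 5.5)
\textbf{The gap is in Step 2, adjacent coherences.} Your outline is the paper's: identify $E_d$, bound $\lambda_{\GNS}$ blockwise over diagonals and coherences, apply Corollary~\ref{cor:main} with $\log C_{\cb}(E_d)=2\log Z_d+(d-1)\beta$, and test on $\Id/d$. Steps 3 and 4 are correct as written. The one nontrivial computation, however, is left unfinished, and the sketch you give for it is wrong.

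For $X=xE_{l,l+1}+yE_{l+1,l}$, both cross terms appear, summed rather than ``one or the other'':
\[
-\Re\Tr(\sigma_dX^*\cL_{d,\theta}X)=\tfrac12(b_l+b_{l+1})\bigl(\pi_{l+1}|x|^2+\pi_l|y|^2\bigr)-\theta\sqrt{pq}\,(\pi_l+\pi_{l+1})\Re(\bar xy).
\]
Since $\sqrt{pq}(\pi_l+\pi_{l+1})=(p+q)\sqrt{\pi_l\pi_{l+1}}$, the GNS-symmetrized off-diagonal entry in the orthonormal basis $E_{l,l+1}/\sqrt{\pi_{l+1}},\,E_{l+1,l}/\sqrt{\pi_l}$ is $-\theta(p+q)/2$, and this bound is attained. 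Your expression $\tfrac{\sqrt{pq}}2\sqrt{p/q}=\tfrac p2$ keeps only one of the two terms.

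Consequently the $l$-th reservoir alone, which contributes $\tfrac12(b^{(l)}_l+b^{(l)}_{l+1})=\tfrac12(p+q)$ to the diagonal, leaves margin $(1-\theta)(p+q)/2$. This vanishes at ideal squeezing. Your primary argument (only the $l$-th reservoir, cross term $\le\tfrac p2$, margin $q/2$) would give a $\theta$-uniform gap already for $d=2$. That contradicts $\lambda_{\GNS}(\cL_{2,\theta})=(1-\theta)(p+q)/2$. The quantity $\tfrac12(b_j+b_k)-\theta\sqrt{pq}$ in your Step 1 comes from the Hilbert--Schmidt symmetrization. It is fine for showing the coherence blocks have no kernel, but it cannot be reused for $\lambda_{\GNS}$.

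\textbf{What is needed is your fallback, actually carried out.} Use the total exit rates $b_0=q$, $b_{d-1}=p$, and $b_i=p+q$ for $1\le i\le d-2$; this is where $d\ge3$ enters. The exact block gap is $\tfrac12(b_l+b_{l+1})-\theta(p+q)/2$, which gives:
\begin{itemize}
\item at $l=0$: $(p+2q-\theta(p+q))/2\ge q/2$, because the neighbouring reservoir $l=1$ adds rate $q$ at level $1$;
\item at $l=d-2$: at least $p/2$;
\item in the interior: at least $(p+q)/2$.
\end{itemize}
Combined with your diagonal bound $(\sqrt p-\sqrt q)^2$ and the nonadjacent bound, this yields $\lambda_{\GNS}(\cL_{d,\theta})\ge g_*$, which is the paper's formula \eqref{eq:squeezed-ladder-gap}. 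Without this computation the lower bound is not established uniformly up to $\theta=1$.
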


\begin{proof}
For the lower bound, use \eqref{eq:reservoir-coherences} with total
exit rates $b_i=\sum_l b_i^{(l)}$:
\begin{align*}
b_0&=q,\qquad b_{d-1}=p,\qquad
b_i=p+q\quad(1\le i\le d-2).
\end{align*}
On the GNS-orthonormal basis
$E_{jk}/\sqrt{\pi_k},E_{kj}/\sqrt{\pi_j}$,
the Hermitian part of $-\cL_{d,\theta}$ is
\begin{align*}
\begin{pmatrix}
(b_j+b_k)/2&-\theta(p+q)\mathbf1_{k=j+1}/2\\
-\theta(p+q)\mathbf1_{k=j+1}/2&(b_j+b_k)/2
\end{pmatrix}.
\end{align*}
Combining these blocks with the centered diagonal spectrum
\eqref{eq:reservoir-classical-gap} gives
\begin{align}
\lambda_{\GNS}(\cL_{d,\theta})
&=\min\left\{
p+q-2\sqrt{pq}\cos(\pi/d),\,
\frac{p+q}{2},\,
\frac{p+2q-\theta(p+q)}2
\right\}\nonumber\\
&\ge g_*>0.
\label{eq:squeezed-ladder-gap}
\end{align}
Hence $e^{t\cL_{d,\theta}}\to E_d$, and Corollary~\ref{cor:main}
with \eqref{eq:reservoir-index-entropy} proves the lower bound.

For the upper bound, test on $\Id/d$:
\begin{align*}
(\cL_{d,\theta})_*(\Id/d)
&=\frac{p-q}{d}(E_{00}-E_{d-1,d-1}),\\
\EP_{\cL_{d,\theta}}(\Id/d)
&=\Tr[(\cL_{d,\theta})_*(\Id/d)\log\sigma_d]
 =\frac{d-1}{d}(p-q)\beta,\\
\alpha(\cL_{d,\theta})
&\le\frac{\EP_{\cL_{d,\theta}}(\Id/d)}{2D_d}.
\end{align*}
Here $\log(\Id/d)$ is scalar and
$\Tr[(\cL_{d,\theta})_*(\Id/d)]=0$.
The order follows from $\alpha^{\mathrm c}\le\alpha$ and $D_d\asymp d$.
\end{proof}

The block estimate remains valid for edge-dependent $\theta_j\in[0,1]$.
For $\cL=\sum_jw_j\mathcal G_j(\theta_j)$,
$0<w_-\le w_j\le w_+$,
nonnegative local GNS forms give
\begin{align*}
-\Re\Tr[\sigma_dX^*\cL(X)]
&=-\sum_jw_j\Re\Tr[\sigma_dX^*\mathcal G_j(\theta_j)(X)]\\
&\ge w_-g_*\|X-E_dX\|_{2,\sigma_d}^2,\\
\EP_\cL(\Id/d)
&=\frac{(p-q)\beta}{d}\sum_jw_j
 \le\frac{w_+(d-1)(p-q)\beta}{d}.
\end{align*}
Thus the bounds acquire factors $w_-$ and $w_+$, respectively.

\begin{remark}[Symmetry and ideal squeezing]

Each local generator is KMS symmetric by the balance relation and
\begin{align*}
\sigma_d^{1/2}W_j^*\sigma_d^{-1/2}&=W_j,\qquad
[W_j^*W_j,\sigma_d]=0.
\end{align*}
For $\theta>0$, the unequal GNS matrix entries $\theta q,\theta p$
on adjacent coherences exclude GNS symmetry. At $\theta=0$,
\cite[Corollary~4.13 and Theorem~5.7]{GaoJungeLaRacuenteLi2025}
already gives $\alpha^{\mathrm c}\asymp d^{-1}$.

The reservoir parameters of \cite[Eq.~(6)]{LutkenhausCiracZoller1998} are
\begin{align*}
\gamma&=p-q,\qquad N=\frac{q}{p-q},\qquad
M=\frac{\theta\sqrt{pq}}{p-q},\\
\varphi&=\pi,\qquad M^2=\theta^2N(N+1).
\end{align*}
Here $\gamma$ is the damping rate, $N$ the occupation number, and
$M,\varphi$ the squeezing correlation and phase.
Thus $\theta=1$ is ideal squeezing.

The lower bound remains uniform up to $\theta=1$.
To compare with the unsqueezed generator, in the traceless
matrix-unit basis the Kossakowski blocks give, for $c\ge0$,
\begin{align*}
\cL_{d,\theta}-c\cL_{d,0}\text{ GKLS}
&\Longrightarrow
\begin{pmatrix}
(1-c)p&\theta\sqrt{pq}\\
\theta\sqrt{pq}&(1-c)q
\end{pmatrix}\ge0
\Longrightarrow c\le1-\theta.
\end{align*}
Thus this GKLS comparison degenerates at ideal squeezing.
At $d=2$, $\lambda_{\GNS}(\cL_{2,\theta})=(1-\theta)(p+q)/2$.
For $d\ge3$, overlapping transitions supply the additional dissipation
in \eqref{eq:squeezed-ladder-gap}.
\end{remark}
\begingroup
\let\originalbibitem\bibitem
\renewcommand{\bibitem}[1]{%
  \originalbibitem{#1}%
  \def\currentbibkey{#1}\def\lastbibkey{Wirth2026}%
  \ifx\currentbibkey\lastbibkey\raggedright\fi}
\bibliographystyle{amsplain}
\bibliography{references}
\endgroup

\end{document}